\def \VersionLong {}
	\newcommand{\LongVersion}[1]{\ifdefined\VersionWithComments{\color{red!40!black}#1}\else#1\fi}
	\newcommand{\ShortVersion}[1]{\ifdefined\VersionWithComments{\color{black!40}#1}\fi}
	\newcommand{\LongVersion}[1]{\ifdefined\VersionWithComments{\color{black!60}#1}\fi}
	\newcommand{\ShortVersion}[1]{\ifdefined\VersionWithComments{\color{red!40!black}#1}\else#1\fi}
\newcommand{\LongVersionTable}[1]{}
\newcommand{\ShortVersionTable}[1]{#1}
\newcommand{\nbLoC}[1]{\cellcolor{gray!20}}
\def\@doi#1{\href{https://doi.org/#1}
      {\ttfamily https://doi.org/#1}\egroup}}
\def\@doi#1{\ttfamily https://doi.org/#1\egroup}}
  \def\doi{\bgroup\catcode`\_=12\relax\@doi}}
\newenvironment{ienumeration}
	{\ifdefined\VersionLong\begin{enumerate}\else\begin{inparaenum}[\itshape i\upshape)]\fi}
	{\ifdefined\VersionLong\end{enumerate}\else\end{inparaenum}\fi}
\definecolor{darkblue}{rgb}{0, 0, 0.7}
\crefname{line}{\text{line}}{\text{lines}} %
\definecolor{mygreen}{rgb}{0,0.6,0}
\definecolor{mygray}{rgb}{0.5,0.5,0.5}
\definecolor{mymauve}{rgb}{0.58,0,0.82}
\tiny\color{black}, %
\newcommand{\defProblem}[3]
{%
	\noindent\fcolorbox{black}{blue!15}{
	\begin{minipage}{.95\columnwidth}
		\textbf{#1 Problem:}\\
		\textsc{Input}: #2\\
		\textsc{Problem}: #3
	\end{minipage}
}
	
	\smallskip
	
}
\tikzstyle{every node}=[initial text=]
\tikzstyle{location}=[rectangle, rounded corners, minimum size=12pt, draw=black, fill=blue!10, inner sep=2pt]
\tikzstyle{invariant}=[draw=black, dotted, inner sep=1pt, node distance=0] %
\tikzstyle{final}=[double, fill=blue!50]
\tikzstyle{urgent}=[fill=yellow, thick, dotted] %
\tikzstyle{private}=[fill=red,thick]
\definecolor{coloract}{rgb}{0.50, 0.70, 0.30}
\definecolor{colorclock}{rgb}{0.4, 0.4, 1}
\definecolor{colordisc}{rgb}{1, 0, 1}
\definecolor{colorloc}{rgb}{0.4, 0.4, 0.65}
\definecolor{colorparam}{rgb}{1, 0.6, 0.0}
\definecolor{loccolor1}{rgb}{1, 0.3, 0.3}
\definecolor{loccolor2}{rgb}{0.3, 1, 0.3}
\definecolor{loccolor3}{rgb}{0.3, 0.3, 1}
\definecolor{loccolor4}{rgb}{1, 0.3, 1}
\definecolor{loccolor5}{rgb}{1, 1, 0.3}
\definecolor{loccolor6}{rgb}{0.3, 1, 1}
\definecolor{loccolor7}{rgb}{0.9, 0.6, 0.2}
\definecolor{loccolor8}{rgb}{0.7, 0.4, 1}
\definecolor{loccolor9}{rgb}{0.5, 1, 0.75}
\definecolor{loccolor10}{rgb}{0.8, 0.7, 0.6}
\definecolor{loccolor11}{rgb}{0.6, 0.7, 0.8}
\definecolor{loccolor12}{rgb}{0.2, 0.5, 0.9}
\definecolor{loccolor13}{rgb}{0.5, 0.9, 0.2}
\definecolor{loccolor14}{rgb}{0.9, 0.2, 0.5}
\definecolor{loccolor15}{rgb}{0.7, 0.7, 0.7}
\definecolor{loccolor16}{rgb}{0.8, 0.8, 0.5}
\newcommand{\styleact}[1]{\ensuremath{\textcolor{coloract}{\mathrm{#1}}}}
\newcommand{\styleclock}[1]{\ensuremath{\textcolor{colorclock}{\mathrm{#1}}}}
\newcommand{\styledisc}[1]{\ensuremath{\textcolor{colordisc}{\mathrm{#1}}}}
\newcommand{\styleloc}[1]{\ensuremath{\mathrm{#1}}}
\newcommand{\styleparam}[1]{\ensuremath{\textcolor{colorparam}{\mathrm{#1}}}}
\newcommand{\stylecode}[1]{\textcolor{colorloc}{\texttt{#1}}}
\newcommand{\stylebench}[1]{\textcolor{colorloc}{\texttt{#1}}}
\newcommand{\cellHeader}[0]{\cellcolor{blue!20}\bfseries}
\newcommand{\rowHeader}{\rowcolor{blue!20}\bfseries}
\newcommand{\cellYes}{\cellcolor{red!20}\textbf{$\surd$}}
\newcommand{\cellNo}{\cellcolor{green!20}\textbf{$\times$}}
\newcommand{\cellFixable}{\cellcolor{orange!20}\textbf{$(\surd)$}}
\newcommand{\cellKall}{\cellcolor{blue!20}\textbf{$\KTrue$}}
\newcommand{\cellKnone}{\cellcolor{red!20}\textbf{$\KFalse$}}
\newcommand{\cellKsome}{\cellcolor{green!20}\textbf{$K$}}
\newcommand{\init}{_0}
\newcommand{\A}{\ensuremath{\mathcal{A}}}
\newcommand{\Azeroinf}{\ensuremath{\A_{0,\infty}}}
\newcommand{\Actions}{\Sigma}
\newcommand{\action}{\ensuremath{a}}
\newcommand{\actionEnd}{\ensuremath{\styleact{finish}}}
\newcommand{\ActionsIndices}{\zeta}
\newcommand{\assign}{\leftarrow}
\newcommand{\bflag}{\ensuremath{b}} %
\newcommand{\BTrue}{\text{true}}
\newcommand{\BFalse}{\text{false}}
\newcommand{\C}{C}
\newcommand{\Clock}{\mathbb{X}} %
\newcommand{\ClockCard}{H} %
\newcommand{\clock}{x} %
\newcommand{\clockabs}{\ensuremath{x_\mathit{abs}}} %
\newcommand{\clockval}{\mu} %
\newcommand{\ClocksZero}{\vec{0}}
\newcommand{\compOp}{\bowtie}
\newcommand{\compOpLeq}{\triangleleft}
\newcommand{\CTrue}{\mathbf{true}}
\newcommand{\duration}{\ensuremath{\mathit{dur}}}
\newcommand{\edge}{e}
\newcommand{\Edges}{E}
\newcommand{\longuefleche}[1]{\stackrel{#1}{\longrightarrow}}
\newcommand{\longueflecheRel}[1]{\stackrel{#1}{\mapsto}}
\newcommand{\flecheRel}{{\rightarrow}}
\newcommand{\grandn}{{\mathbb N}}
\newcommand{\grandq}{{\mathbb Q}}
\newcommand{\grandqplus}{\grandq_{+}} %
\newcommand{\grandr}{\ensuremath{\mathbb R}}
\newcommand{\grandrplus}{\ensuremath{\grandr_{+}}} %
\newcommand{\grandz}{{\mathbb Z}}
\newcommand{\guard}{g}
\newcommand{\invariant}{I}
\newcommand{\K}{K}
\newcommand{\KTrue}{\top}
\newcommand{\KFalse}{\bot}
\newcommand{\loc}{\ensuremath{\ell}} %
\newcommand{\locinit}{\loc\init}
\newcommand{\Loc}{L} %
\newcommand{\locfinal}{\ensuremath{\loc_f}}
\newcommand{\locpriv}{\ensuremath{\loc_{\mathit{priv}}}}
\newcommand{\locpub}{\ensuremath{\loc_{\mathit{pub}}}}
\newcommand{\lterm}{\mathit{lt}}
\newcommand{\Param}{\mathbb{P}} %
\newcommand{\param}{p} %
\newcommand{\paramabs}{\ensuremath{\param_\mathit{abs}}} %
\newcommand{\ParamCard}{M} %
\newcommand{\pval}{v} %
\newcommand{\PZG}{\ensuremath{\mathcal{PZG}}} %
\newcommand{\R}{{\mathbb{R}}}
\newcommand{\Rgeqzero}{\R_{\geq 0}}
\newcommand{\sinit}{s\init} %
\newcommand{\somelocs}{T} %
\newcommand{\state}{\ensuremath{s}} %
\newcommand{\States}{S} %
\newcommand{\Succ}{\mathsf{Succ}}
\newcommand{\timelapse}[1]{#1^\nearrow}
\newcommand{\Times}{\ensuremath{D}}
\newcommand{\varrun}{\rho} %
\newcommand{\PrivDurReach}[3]{\ensuremath{\mathit{DReach}^{#1}_{#2}(#3)}}
\newcommand{\PubDurReach}[3]{\ensuremath{\mathit{DReach}^{#1}_{\neg #2}(#3)}}
\newcommand{\styleSymbStatesSet}[1]{\ensuremath{\mathbf{#1}}}
\newcommand{\symbstate}{\ensuremath{\styleSymbStatesSet{s}}} %
\newcommand{\SymbState}{\ensuremath{\styleSymbStatesSet{S}}} %
\newcommand{\symbstateinit}{\symbstate\init} %
\newcommand{\symbtrans}{{\Rightarrow}} %
\newcommand{\resets}{R}
\newcommand{\projectP}[1]{\ensuremath{#1{\downarrow_{\Param}}}}
\newcommand{\reset}[2]{\ensuremath{[#1]_{#2}}}
\newcommand{\valuate}[2]{\ensuremath{#2(#1)}}
\newcommand{\wv}[2]{#1|#2} %
\newcommand{\stylealgo}[1]{\ensuremath{\textsf{#1}}}
\newcommand{\Copy}{\stylealgo{Copy}}
\newcommand{\EFsynth}{\stylealgo{EFsynth}}
\newcommand{\Enrich}{\stylealgo{Enrich}}
\newcommand{\SynthOp}{\stylealgo{SynthOp}}
\newcommand{\gennote}[3]{\todo[linecolor=#2,backgroundcolor=#2!25,bordercolor=#2]{#3: #1}}
\newcommand{\ea}[1]{\gennote{#1}{blue}{ÉA}}
\newcommand{\sj}[1]{{\gennote{#1}{purple}{SJ}}}
\newcommand{\instructions}[1]{{\gennote{\bfseries #1}{red}{Instructions}}}
	\newcommand{\todoinline}[1]{\mbox{}{\color{red}{\textbf{TODO}\ifx#1\\\else:\ \fi #1}}} %
	\newcommand{\todoinline}[1]{}
\footnotesize\printfield{doi}}
\newcommand{\imitator}{\textsf{IMITATOR}}
\newcommand{\uppaal}{\textsc{Uppaal}}
 	\definecolor{colorok}{RGB}{80,80,150}
	\definecolor{colorok}{RGB}{0,0,0}
\newcommand{\eg}{\textcolor{colorok}{e.\,g.,}\xspace}
\newcommand{\ie}{\textcolor{colorok}{i.\,e.,}\xspace}
\newcommand{\st}{\textcolor{colorok}{s.t.}\xspace}
\newcommand{\wrt}{\textcolor{colorok}{w.r.t.}\xspace}
\renewcommand{\orcidID}[1]{\href{https://orcid.org/#1}{\includegraphics[width=1em]{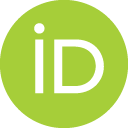}}}
\title{Parametric Timed Model Checking for Guaranteeing Timed Opacity\thanks{%
	\LongVersion{This is the author (and extended) version of the manuscript of the same name published in the proceedings of the 17th International Symposium on Automated Technology for Verification and Analysis (\href{http://atva2019.iis.sinica.edu.tw/}{ATVA 2019}).
	This version contains some additional explanations and all proofs.
	The published version is available at
		\href{https://www.doi.org/10.1007/978-3-030-31784-3_7}{\nolinkurl{10.1007/978-3-030-31784-3_7}}.
	}%
	This work is partially supported by
		the ANR national research program PACS (ANR-14-CE28-0002),
		the ANR-NRF research program ProMiS,
		and
	by ERATO HASUO Metamathematics for Systems Design Project (No.\ JPMJER1603), JST.
}}
\author{\'Etienne Andr\'e\inst{1,2,3}\orcidID{0000-0001-8473-9555}
\and
Jun Sun\inst{4}\orcidID{0000-0002-3545-1392} %
}
\institute{Université Paris 13, LIPN, CNRS, UMR 7030\\F-93430, Villetaneuse, France
\and
JFLI, CNRS, Tokyo, Japan
\and
National Institute of Informatics, Tokyo, Japan
\and
School of Information Systems, Singapore Management University
}
\begin{document}

\pagestyle{plain}

\maketitle

\thispagestyle{plain}

\setcounter{footnote}{0}

\ifdefined \VersionWithComments
	\textcolor{red}{\textbf{This is the version with comments. To disable comments, comment out line~3 in the \LaTeX{} source.}}
\fi

\begin{abstract}
	Information leakage can have dramatic consequences on systems security.
	Among harmful information leaks, the timing information leakage is the ability for an attacker to deduce internal information depending on the system execution time.
	We address the following problem:
		given a timed system, %
			synthesize the execution times %
				for which one cannot deduce whether the system performed some secret behavior.
	We solve this problem in the setting of timed automata (TAs).
	We first provide a general solution, and then extend the problem to parametric TAs, by synthesizing internal timings making the TA secure.
	We study decidability, devise algorithms, and show that our method can also apply %
	to program analysis.
	
	\LongVersion{\keywords{opacity \and timed automata \LongVersion{\and \imitator{} }\and parameter synthesis.}}
\end{abstract}

\ea{hello}
\sj{hello}

\todo{(Notes from August 2018)
Sun Jun says that the property could be:
	for all output (i.e., a "valuation of variable" (that could use actions or locations)), it must not be possible to know whether we passed by secret locations or not

This paper would be more "program-oriented" than the other

Then, we need:
	examples!
	a small text explaining how we convert a program to a TA. Basically all instructions are 0- or constant-time, while sleep/wait have different times

I'd like POST; Sun Jun says he'd like a stronger one. TACAS ?
FM would be great (but April 2019)
So we target TACAS.
}

\instructions{ATVA 2019:
Regular research papers (16 pages, including references)
}

\section{Introduction}\label{section:introduction}

Timed systems combine concurrency and possibly hard real-time constraints.
Information leakage can have dramatic consequences on the security of such systems.
Among harmful information leaks, the \emph{timing information leakage} is the ability for an attacker to deduce internal information depending on timing information.
In this work, we focus on the execution time, \ie{} when a system works as an almost black-box, with the ability of an attacker to mainly observe its execution time.

We address the following problem:
	given a timed system, a private state denoting the execution of some secret behavior and a final state denoting the completion of the execution,
		synthesize the execution times to the final state for which one cannot deduce whether the system has passed through the private state.
We solve this problem in the setting of timed automata (TAs), which is a popular extension of finite-state automata with clocks~\cite{AD94}.
We first prove that this problem is solvable\ea{i gave up exact complexity}, and we provide an algorithm, that we implement and apply to a set of benchmarks containing notably a set of Java programs known for their (absence of) timing information leakage.

Then we consider a higher-level problem by allowing (internal) timing parameters in the system, that can model uncertainty or unknown constants at early design stage.
The setting becomes parametric timed automata~\cite{AHV93}, and the problem asks:
	given a timed system with timing parameters, a private state and a final state,
		synthesize the timing parameters and the execution times for which one cannot deduce whether the system has passed through the private state.
Although we show that the problem is in general undecidable, we provide a decidable subclass; then we devise a general procedure not guaranteed to terminate, but that behaves well on examples from the literature.
\LongVersion{
\paragraph*{Outline}
After reviewing related works in \cref{section:related},
\cref{section:preliminaries} recalls necessary concepts and \cref{section:problem} introduces the problem.
\cref{section:TA} addresses timed-opacity for timed automata.
We then address the parametric version of timed-opacity, with theory studied in \cref{section:theory},
algorithmic in~\cref{section:synthesis}
and experiments in \cref{section:experiments}.
\cref{section:conclusion} concludes the paper.
}
\section{Related works}\label{section:related}

This work is closely related to the line of work on defining and analyzing information flow in timed automata.
It is well-known (see \eg{} \LongVersion{\cite{Kocher96,FS00,BB07,KPJJ13,BCLR15}}\ShortVersion{\cite{Kocher96,BCLR15}}) that time is a potential attack vector against secure systems.
That is, it is possible that a non-interferent (secure) system can become interferent (insecure) when timing constraints are added~\cite{GMR07}.
In~\LongVersion{\cite{BDST02,BT03}}\ShortVersion{\cite{BDST02}}, a first notion of \emph{timed} non-interference is proposed.
In~\cite{GMR07}, Gardey \emph{et al.}\ define timed strong non-deterministic non-interference (SNNI) based on timed language equivalence between the automaton with hidden low-level actions and the automaton with removed low-level actions. Furthermore, they show that the problem of determining whether a timed automaton satisfies SNNI is undecidable. In contrast, timed cosimulation-based SNNI, timed bisimulation-based SNNI and timed state SNNI are decidable.
In~\cite{Cassez09}, the problem of checking opacity for timed automata is considered: even for the restricted class of event-recording automata\LongVersion{~\cite{AFH99}}, it is undecidable whether a system is opaque, \ie{} whether an attacker can deduce whether some set of actions was performed, by only observing a given set of observable actions (with their timing).
In~\cite{VNN18}, Vasilikos \emph{et al.}\ define the security of timed automata in term of information flow using a bisimulation relation and develop an algorithm for deriving a sound constraint for satisfying the information flow property locally based on relevant transitions.
In~\cite{BCLR15}, Benattar \emph{et al.}\ study the control synthesis problem of timed automata for SNNI. That is, given a timed automaton, they propose a method to automatically generate a (largest) sub-systems such that it is non-interferent if possible.
Different from the above-mentioned work, our work considers parametric timed automata, \ie{} timed systems with unknown design parameters, and focuses on synthesizing parameter valuations which guarantee information flow property.
As far as we know, this is the first work on parametric model checking for timed automata for information flow property.
Compared to~\cite{BCLR15}, our approach is more realistic as it does not require change of program structure.
Rather, our result provides guidelines on how to choose the timing parameters (\eg{} how long to wait after certain program statements) for avoiding information leakage.

In~\cite{NNV17}, the authors propose a type system dealing with non-determinism and (continuous) real-time, the adequacy of which is ensured using non-interference.
We share the common formalism of TA; however, we mainly focus on leakage as execution time, and we \emph{synthesize} internal parts of the system (clock guards), in contrast to~\cite{NNV17} where the system is fixed.

This work is related to work on mitigating information leakage through time side channel\LongVersion{~\cite{DBLP:conf/popl/Agat00,DBLP:conf/icisc/MolnarPSW05,DBLP:conf/sp/CoppensVBS09,DBLP:journals/siglog/WangS17,DBLP:conf/issta/WuGS018}}.
\LongVersion{%
	In~\cite{DBLP:conf/popl/Agat00}, Agat \emph{et al.}\ proposed to eliminate time side channel through type-driven cross-copying.
	In~\cite{DBLP:conf/icisc/MolnarPSW05}, Molnar~\emph{et al.}\ proposed, along the program counter model, a method for mitigating side channel through merging branches. A similar idea was proposed in~\cite{DBLP:journals/entcs/BartheRW06}.
	Coppens \emph{et al.}~\cite{DBLP:conf/sp/CoppensVBS09} developed a compiler backend for removing such leaks on x86 processors.
}%
\ShortVersion{For example, in}\LongVersion{In}~\cite{DBLP:journals/siglog/WangS17}, Wang \emph{et al.}\ proposed to automatically generate masking code for eliminating side channel through program synthesis.
In~\cite{DBLP:conf/issta/WuGS018}, Wu \emph{et al.}\ proposed to eliminate time side channel through program repair.
Different from the above-mentioned works, we reduce the problem of mitigating time side channel as a parametric model checking problem and solve it using parametric reachability analysis techniques. 

This work is related to work on identifying information leakage through timing analysis\LongVersion{~\cite{DBLP:conf/kbse/SungPW18,DBLP:conf/rtss/ChattopadhyayR11,ALKH16,DBLP:conf/cav/ZhangGSW18,DBLP:conf/atal/DennisSF16,DBLP:conf/uss/DoychevFKMR13,DBLP:journals/corr/abs-1807-03280}}.
In~\cite{DBLP:conf/rtss/ChattopadhyayR11}, Chattopadhyay \emph{et al.}\ applied model checking to perform cache timing analysis. In~\cite{DBLP:conf/rtas/ChuJM16}, Chu \emph{et al.}\ performed similar analysis through symbolic execution. In~\cite{ALKH16}, Abbasi \emph{et al.}\ apply the NuSMV model checker to verify integrated circuits against information leakage through side channels. In~\cite{DBLP:conf/uss/DoychevFKMR13}, a tool is developed to identify time side channel through static analysis. In~\cite{DBLP:conf/cav/ZhangGSW18}, Sung \emph{et al.}\ developed a framework based on LLVM for cache timing analysis.

\section{Preliminaries}\label{section:preliminaries}
\LongVersion{In this work, we assume a system is modeled in the form of a parametric timed automaton.}
\ea{Perhaps useless?: \LongVersion{%
	In \cref{ss:Java2PTA}, we discuss how we can model programs with unknown design parameters (\eg{} a Java program with a statement \stylecode{Thread.sleep(n)} where \stylecode{n} is unknown) as parametric timed automata.
}}

\LongVersion{
\subsection{Clocks, parameters and guards}
}

We assume a set~$\Clock = \{ \clock_1, \dots, \clock_\ClockCard \} $ of \emph{clocks}, \ie{} real-valued variables that evolve at the same rate.
A clock valuation is\LongVersion{ a function}
$\clockval : \Clock \rightarrow \Rgeqzero$.
We write $\ClocksZero$ for the clock valuation assigning $0$ to all clocks.
Given $d \in \Rgeqzero$, $\clockval + d$ \ShortVersion{is}\LongVersion{denotes the valuation} \st{} $(\clockval + d)(\clock) = \clockval(\clock) + d$, for all $\clock \in \Clock$.
Given $\resets \subseteq \Clock$, we define the \emph{reset} of a valuation~$\clockval$, denoted by $\reset{\clockval}{\resets}$, as follows: $\reset{\clockval}{\resets}(\clock) = 0$ if $\clock \in \resets$, and $\reset{\clockval}{\resets}(\clock)=\clockval(\clock)$ otherwise.

We assume a set~$\Param = \{ \param_1, \dots, \param_\ParamCard \} $ of \emph{parameters}\LongVersion{, \ie{} unknown constants}.
A parameter {\em valuation} $\pval$ is\LongVersion{ a function}
$\pval : \Param \rightarrow \grandqplus$.
We assume ${\compOp} \in \{<, \leq, =, \geq, >\}$.
A guard~$\guard$ is a constraint over $\Clock \cup \Param$ defined by a conjunction of inequalities of the form
$\clock \compOp \sum_{1 \leq i \leq \ParamCard} \alpha_i \param_i + d$, with
	$\param_i \in \Param$,
	and
	$\alpha_i, d \in \grandz$.
Given~$\guard$, we write~$\clockval\models\pval(\guard)$ if %
the expression obtained by replacing each~$\clock$ with~$\clockval(\clock)$ and each~$\param$ with~$\pval(\param)$ in~$\guard$ evaluates to true.

\LongVersion{
\subsection{Parametric timed automata}

Parametric timed automata (PTA) extend timed automata with parameters within guards and invariants in place of integer constants~\cite{AHV93}.
}

\begin{definition}[PTA]\label{def:uPTA}
	A PTA $\A$ is a tuple \mbox{$\A = (\Actions, \Loc, \locinit, \Clock, \Param, \invariant, \Edges)$}, where:
	\begin{ienumeration}
		\item $\Actions$ is a finite set of actions,
		\item $\Loc$ is a finite set of locations,
		\item $\locinit \in \Loc$ is the initial location,
		\item $\Clock$ is a finite set of clocks,
		\item $\Param$ is a finite set of parameters,
		\item $\invariant$ is the invariant, assigning to every $\loc\in \Loc$ a guard $\invariant(\loc)$,
		\item $\Edges$ is a finite set of edges  $\edge = (\loc,\guard,\action,\resets,\loc')$
		where~$\loc,\loc'\in \Loc$ are the source and target locations, $\action \in \Actions$, $\resets\subseteq \Clock$ is a set of clocks to be reset, and $\guard$ is a guard.
	\end{ienumeration}
\end{definition}
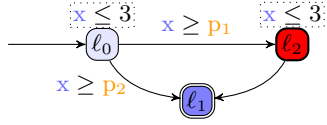
\begin{figure}[tb]
 
	\centering
	 \footnotesize

	\begin{tikzpicture}[scale=1, xscale=2.5, yscale=1.5, auto, ->, >=stealth']
 
		\node[location, initial] at (0, 0) (s0) {$\loc_0$};
 
		\node[location, private] at (1, 0) (s2) {$\loc_2$};
 
		\node[location, final] at (.5, -.5) (s1) {$\loc_1$};
 
		\node[invariant, above=of s0] {$\styleclock{\clock} \leq 3$};
		\node[invariant, above=of s2] {$\styleclock{\clock} \leq 3$};
		
		\path (s0) edge node[align=center]{$\styleclock{\clock} \geq \styleparam{\param_1}$} (s2);
		\path (s0) edge[bend right] node[left, align=center]{$\styleclock{\clock} \geq \styleparam{\param_2}$} (s1);
		\path (s2) edge[bend left] node[align=center]{} (s1);

	\end{tikzpicture}
	\caption{A PTA example}
	\label{figure:example-PTA}

\end{figure}
\begin{example}
	Consider the PTA in \cref{figure:example-PTA} (inspired by \cite[Fig.~1b]{GMR07}), containing one clock~$\clock$ and two parameters~$\param_1$ and~$\param_2$.
	$\loc_0$ is the initial location, while~$\loc_1$ is the (only) accepting location.
\end{example}

Given\LongVersion{ a parameter valuation}~$\pval$, we denote by $\valuate{\A}{\pval}$ the non-parametric structure where all occurrences of a parameter~$\param_i$ have been replaced by~$\pval(\param_i)$.
\LongVersion{%
	We denote as a \emph{timed automaton} any structure $\valuate{\A}{\pval}$, by assuming a rescaling of the constants: by multiplying all constants in $\valuate{\A}{\pval}$ by the least common multiple of their denominators, we obtain an equivalent (integer-valued) TA\LongVersion{, as defined in~\cite{AD94}}.
}

\LongVersion{
\subsubsection{Synchronized product of PTAs}
}

The \emph{synchronous product} (using strong broadcast, \ie{} synchronization on a given set of actions)\LongVersion{, or \emph{parallel composition},} of several PTAs gives a PTA\ea{linked to arXiv}.

\newcommand{\defSynchronizedProduct}{%

\begin{definition}[synchronized product of PTAs]\label{definition:parallel}
	Let $N \in \grandn$.
	Given a set of PTAs $\A_i = (\Actions_i, \Loc_i, (\locinit)_i, \Clock_i, \Param_i, \invariant_i, \Edges_i)$, $1 \leq i \leq N$,
	and a set of actions $\Actions_s$,
	the \emph{synchronized product} of $\A_i$, $1 \leq i \leq N$,
	denoted by $\A_1 \parallel_{\Actions_s} \A_2 \parallel_{\Actions_s} \cdots \parallel_{\Actions_s} \A_N$,
	is the tuple
		$(\Actions, \Loc, \locinit,  \Clock, \Param, \invariant, \Edges)$, where:
	\begin{enumerate}
		\item $\Actions = \bigcup_{i=1}^N\Actions_i$,
		\item $\Loc = \prod_{i=1}^N \Loc_i$,
		\item $\locinit = ((\locinit)_1, \dots, (\locinit)_N)$,
		\item $\Clock = \bigcup_{1 \leq i \leq N} \Clock_i$,
		\item $\Param = \bigcup_{1 \leq i \leq N} \Param_i$,
		\item $\invariant((\loc_1, \dots, \loc_N)) = \bigwedge_{i = 1}^{N} \invariant_i(\loc_i)$ for all $(\loc_1, \dots, \loc_N) \in \Loc$,
	\end{enumerate}
	and $\Edges{}$ is defined as follows.
	For all $\action \in \Actions$,
	let $\ActionsIndices_\action$ be the subset of indices $i \in 1, \dots, N$
	such that $\action \in \Actions_i$.
	For all  $\action \in \Actions$,
	for all $(\loc_1, \dots, \loc_N) \in \Loc$,
	for all \mbox{$(\loc_1', \dots, \loc_N') \in \Loc$},
	$\big((\loc_1, \dots, \loc_N), \guard, \action, \resets, (\loc'_1, \dots, \loc'_N)\big) \in \Edges$
	if:
	\begin{itemize}
		\item if $\action \in \Actions_s$, then
		\begin{enumerate}
			\item for all $i \in \ActionsIndices_\action$, there exist $\guard_i, \resets_i$ such that $(\loc_i, \guard_i, \action, \resets_i, \loc_i') \in \Edges_i$, $\guard = \bigwedge_{i \in \ActionsIndices_\action} \guard_i$, $\resets = \bigcup_{i \in \ActionsIndices_\action}\resets_i$, and,
			\item for all $i \not\in \ActionsIndices_\action$, $\loc_i' = \loc_i$.
		\end{enumerate}
		\item otherwise (if $\action \notin \Actions_s$), then there exists $i \in \ActionsIndices_\action$ such that
		\begin{enumerate}
			\item there exist $\guard_i, \resets_i$ such that $(\loc_i, \guard_i, \action, \resets_i, \loc_i') \in \Edges_i$, $\guard = \guard_i$, $\resets = \resets_i$, and,
			\item for all $j \neq i$, $\loc_j' = \loc_j$.
		\end{enumerate}
	\end{itemize}
\end{definition}

That is, synchronization is only performed on~$\Actions_s$, and other actions are interleaved.

}

\LongVersion{\defSynchronizedProduct{}}

\LongVersion{
\subsubsection{Concrete semantics of TAs}

Let us now recall the concrete semantics of TA.
}

\begin{definition}[Semantics of a TA]
	Given a PTA $\A = (\Actions, \Loc, \locinit, \Clock, \Param, \invariant, \Edges)$,
	and a parameter valuation~\(\pval\),
	the semantics of $\valuate{\A}{\pval}$ is given by the timed transition system (TTS) $(\States, \sinit, \flecheRel)$, with
	\begin{itemize}
		\item $\States = \{ (\loc, \clockval) \in \Loc \times \Rgeqzero^\ClockCard \mid \clockval \models \valuate{\invariant(\loc)}{\pval} \}$, %
		\LongVersion{\item} $\sinit = (\locinit, \ClocksZero) $,
		\item  $\flecheRel$ consists of the discrete and (continuous) delay transition relations:
		\begin{ienumeration}
			\item discrete transitions: $(\loc,\clockval) \longueflecheRel{\edge} (\loc',\clockval')$, %
				if $(\loc, \clockval) , (\loc',\clockval') \in \States$, and there exists $\edge = (\loc,\guard,\action,\resets,\loc') \in \Edges$, such that $\clockval'= \reset{\clockval}{\resets}$, and $\clockval\models\pval(\guard$).
			\item delay transitions: $(\loc,\clockval) \longueflecheRel{d} (\loc, \clockval+d)$, with $d \in \Rgeqzero$, if $\forall d' \in [0, d], (\loc, \clockval+d') \in \States$.
		\end{ienumeration}
	\end{itemize}
\end{definition}

    Moreover we write $(\loc, \clockval)\longuefleche{(\edge, d)} (\loc',\clockval')$ for a combination of a delay and discrete transition if
		$\exists  \clockval'' :  (\loc,\clockval) \longueflecheRel{d} (\loc,\clockval'') \longueflecheRel{\edge} (\loc',\clockval')$.

Given a TA~$\valuate{\A}{\pval}$ with concrete semantics $(\States, \sinit, \flecheRel)$, we refer to the states of~$\States$ as the \emph{concrete states} of~$\valuate{\A}{\pval}$.
A \emph{run} of~$\valuate{\A}{\pval}$ is an alternating sequence of concrete states of $\valuate{\A}{\pval}$ and pairs of edges and delays starting from the initial state $\sinit$ of the form
$\state_0, (\edge_0, d_0), \state_1, \cdots$
with
$i = 0, 1, \dots$, $\edge_i \in \Edges$, $d_i \in \Rgeqzero$ and
	$\state_i \longuefleche{(\edge_i, d_i)} \state_{i+1}$.
The \emph{duration} of a finite run $\varrun : \state_0, (\edge_0, d_0), \state_1, \cdots, \state_i$ is $\duration(\varrun) = \sum_{0 \leq j \leq i-1} d_j$.
Given\LongVersion{ a state}~$\state=(\loc, \clockval)$, we say that $\state$ is reachable in~$\valuate{\A}{\pval}$ if $\state$ appears in a run of $\valuate{\A}{\pval}$.
By extension, we say that $\loc$ is reachable\LongVersion{; and by extension again, given a set~$\somelocs$ of locations, we say that $\somelocs$ is reachable if there exists $\loc \in \somelocs$ such that $\loc$ is reachable in~$\valuate{\A}{\pval}$}.
Given $\loc, \loc' \in \Loc$ and a run~$\varrun$, we say that $\loc$ is reachable on the way to~$\loc'$ in~$\varrun$ if $\varrun$ is of the form $(\loc_0), (\edge_0, d_0), \cdots, (\edge_n, d_n), \cdots (\edge_m, d_m) \cdots$ for some~$m,n \in \grandn$ such that $\loc_n = \loc$, $\loc_m = \loc'$ and $\forall 0 \leq i \leq n-1, \loc_i \neq \loc'$.
Conversely, $\loc$ is unreachable on the way to~$\loc'$ in~$\varrun$ if $\varrun$ is of the form $(\loc_0), (\edge_0, d_0), \cdots, (\edge_m, d_m) \cdots$ with $\loc_m = \loc'$ and $\forall 0 \leq i \leq m-1, \loc_i \neq \loc$.

\ea{question for me: what if $\loc'$ (the final location) is not terminal?}
\ea{note for myself: what about Zeno runs?}

\begin{example}
	Consider again the PTA~$\A$ in \cref{figure:example-PTA}, and let $\pval$ be such that $\pval(\param_1) = 1$ and $\pval(\param_2) = 2$.
	Consider the following run~$\varrun$ of $\valuate{\A}{\pval}$:
	$(\loc_0, \clock = 0) , (\edge_2, 1.4) , (\loc_2, \clock = 1.4) , (\edge_3, 1.3) , (\loc_1, \clock = 2.7)$, where
		$\edge_2$ is the edge from $\loc_0$ to~$\loc_2$ in \cref{figure:example-PTA},
		and
		$\edge_3$ is the edge from $\loc_2$ to~$\loc_1$.
		We write ``$\clock = 1.4$'' instead of ``$\clockval$ such that $\clockval(\clock) = 1.4$''.
	We have $\duration(\varrun) = 1.4 + 1.3 = 2.7$.
	In addition, $\loc_2$ is reachable on the way to~$\loc_1$ in~$\varrun$.
\end{example}
\newcommand{\appendixSymbolicSemantics}{

\subsection{Symbolic semantics}\label{ss:symbolic}

Let us now recall the symbolic semantics of PTAs (see \eg{} \cite{HRSV02}\ea{add back \cite{ACEF09} to final version (double-blind)}).

\paragraph{Constraints}
We first need to define operations on constraints.
A linear term over $\Clock \cup \Param$ is of the form $\sum_{1 \leq i \leq \ClockCard} \alpha_i \clock_i + \sum_{1 \leq j \leq \ParamCard} \beta_j \param_j + d$, with
	$\clock_i \in \Clock$,
	$\param_j \in \Param$,
	and
	$\alpha_i, \beta_j, d \in \grandz$.
A \emph{constraint}~$\C$ (\ie{} a convex polyhedron) over $\Clock \cup \Param$ is a conjunction of inequalities of the form $\lterm \compOp 0$, where $\lterm$ is a linear term.

Given a parameter valuation~$\pval$, $\valuate{\C}{\pval}$ denotes the constraint over~$\Clock$ obtained by replacing each parameter~$\param$ in~$\C$ with~$\pval(\param)$.
Likewise, given a clock valuation~$\clockval$, $\valuate{\valuate{\C}{\pval}}{\clockval}$ denotes the expression obtained by replacing each clock~$\clock$ in~$\valuate{\C}{\pval}$ with~$\clockval(\clock)$.
We say that %
$\pval$ \emph{satisfies}~$\C$,
denoted by $\pval \models \C$,
if the set of clock valuations satisfying~$\valuate{\C}{\pval}$ is nonempty.
Given a parameter valuation $\pval$ and a clock valuation $\clockval$, we denote by $\wv{\clockval}{\pval}$ the valuation over $\Clock\cup\Param$ such that
for all clocks $\clock$, $\valuate{\clock}{\wv{\clockval}{\pval}}=\valuate{\clock}{\clockval}$
and
for all parameters $\param$, $\valuate{\param}{\wv{\clockval}{\pval}}=\valuate{\param}{\pval}$.
We use the notation $\wv{\clockval}{\pval} \models \C$ to indicate that $\valuate{\valuate{\C}{\pval}}{\clockval}$ evaluates to true.
We say that $\C$ is \emph{satisfiable} if $\exists \clockval, \pval \text{ s.t.\ } \wv{\clockval}{\pval} \models \C$.

We define the \emph{time elapsing} of~$\C$, denoted by $\timelapse{\C}$, as the constraint over $\Clock$ and $\Param$ obtained from~$\C$ by delaying all clocks by an arbitrary amount of time.
That is,
\[\wv{\clockval'}{\pval} \models \timelapse{\C} \text{ iff } \exists \clockval : \Clock \to \grandrplus, \exists d \in \grandrplus \text { s.t. } \wv{\clockval}{\pval} \models \C \land \clockval' = \clockval + d \text{.}\]
Given $\resets \subseteq \Clock$, we define the \emph{reset} of~$\C$, denoted by $\reset{\C}{\resets}$, as the constraint obtained from~$\C$ by resetting the clocks in~$\resets$, and keeping the other clocks unchanged.
We denote by $\projectP{\C}$ the projection of~$\C$ onto~$\Param$, \ie{} obtained by eliminating the variables not in~$\Param$ (\eg{} using Fourier-Motzkin\LongVersion{~\cite{Schrijver86})}.

\begin{definition}[Symbolic state]
	A symbolic state is a pair $(\loc, \C)$ where $\loc \in \Loc$ is a location, and $\C$ its associated parametric zone.
\end{definition}
\begin{definition}[Symbolic semantics]\label{def:PTA:symbolic}
	Given a PTA $\A = (\Actions, \Loc, \locinit, 
	\Clock, \Param, \invariant, \Edges)$,
	the symbolic semantics of~$\A$ is the labeled transition system called \emph{parametric zone graph}
	$ \PZG = ( \Edges, \SymbState, \symbstateinit, \symbtrans )$, with
	\begin{itemize}
		\item $\SymbState = \{ (\loc, \C) \mid \C \subseteq \invariant(\loc) \}$, %
		\item $\symbstateinit = \big(\locinit, \timelapse{(\bigwedge_{1 \leq i\leq\ClockCard}\clock_i=0)} \land \invariant(\loc_0) \big)$,
				and
		\item $\big((\loc, \C), \edge, (\loc', \C')\big) \in \symbtrans $ if $\edge = (\loc,\guard,\action,\resets,\loc') \in \Edges$ and
			\[\C' = \timelapse{\big(\reset{(\C \land \guard)}{\resets}\land \invariant(\loc')\big )} \land \invariant(\loc')\]
			with $\C'$ satisfiable.
	\end{itemize}

\end{definition}

That is, in the parametric zone graph, nodes are symbolic states, and arcs are labeled by \emph{edges} of the original PTA.

If $\big((\loc, \C), \edge, (\loc', \C')\big) \in \symbtrans $, we write $\Succ(\symbstate, \edge) = (\loc', \C')$.
By extension, we write $\Succ(\symbstate)$ for $\cup_{\edge \in \Edges} \Succ(\symbstate, \edge)$.

\begin{example}
	Consider again the PTA~$\A$ in \cref{figure:example-PTA}.
	The parametric zone graph of~$\A$ is given in \cref{figure:example-PTA:PZG}, where
		$\edge_1$ is the edge from $\loc_0$ to~$\loc_1$ in \cref{figure:example-PTA},
		$\edge_2$ is the edge from $\loc_0$ to~$\loc_2$,
		and
		$\edge_3$ is the edge from $\loc_2$ to~$\loc_1$.
	In addition, the symbolic states are:
	
	\begin{tabular}{r @{ } l  @{ }c  @{ }l  @{ }l}
		$\symbstate_0 =($ & $\loc_0$ & $,$ & $0 \leq \clock \leq 3 \land \param_1 \geq 0 \land \param_2 \geq 0 $ & $)$\\
		$\symbstate_1 =($ & $\loc_1$ & $,$ & $\clock \geq \param_2 \land 0 \leq \param_2 \leq 3 \land \param_1 \geq 0 $ & $)$\\
		$\symbstate_2 =($ & $\loc_2$ & $,$ & $3 \geq \clock \geq \param_1 \land 0 \leq \param_1 \leq 3 \land \param_2 \geq 0 $ & $)$\\
		$\symbstate_3 =($ & $\loc_1$ & $,$ & $\clock \geq \param_1 \land 0 \leq \param_1 \leq 3 \land \param_2 \geq 0 $ & $)$
		.
	\end{tabular}
\end{example}
\begin{figure}[tb]
 
	\centering
	 \footnotesize

	\begin{tikzpicture}[scale=1, xscale=2.5, yscale=2.2, auto, ->, >=stealth']
 
		\node[location, initial] at (0, 0) (s0) {$\symbstate_0$};
 
		\node[location] at (1, 0) (s2) {$\symbstate_2$};
 
		\node[location] at (0, -.5) (s1) {$\symbstate_1$};
 
		\node[location] at (1, -.5) (s3) {$\symbstate_3$};
 
		\path (s0) edge node[align=center]{$\edge_1$} (s1);
		\path (s0) edge[] node[align=center]{$\edge_2$} (s2);
		\path (s2) edge[] node[align=center]{$\edge_3$} (s3);

	\end{tikzpicture}
	\caption{Parametric zone graph of \cref{figure:example-PTA}}
	\label{figure:example-PTA:PZG}

\end{figure}
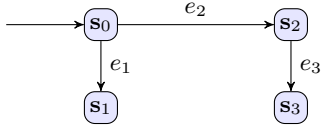

}
\LongVersion{
	\appendixSymbolicSemantics{}
}

\LongVersion{
\subsection{Reachability synthesis}
}

We will use reachability synthesis to solve the problems in \cref{section:problem}.
This procedure, called \EFsynth{}, takes as input a PTA~$\A$ and a set of target locations~$\somelocs$, and attempts to synthesize all parameter valuations~$\pval$ for which~$\somelocs$ is reachable in~$\valuate{\A}{\pval}$.
$\EFsynth(\A, \somelocs)$ was formalized in \eg{} \cite{JLR15} and is a procedure that may not terminate, but that computes an exact result (sound and complete) if it terminates.
\EFsynth{} traverses the \emph{parametric zone graph} of~$\A$\ShortVersion{, which is a potentially infinite extension of the well-known zone graph of TAs (see, \eg{} \cite{ACEF09,JLR15})}.
\begin{example}
	Consider again the PTA~$\A$ in \cref{figure:example-PTA}.
	$\EFsynth(\A, \{ \loc_1 \}) = \param_1 \leq 3 \lor \param_2 \leq 3$.
	Intuitively, it corresponds to all parameter constraints in the parametric zone graph \LongVersion{in \cref{figure:example-PTA:PZG} }associated to symbolic states with location~$\loc_1$.
\end{example}
\newcommand{\lemmaJLR}{

We \LongVersion{finally }recall the correctness of \EFsynth{}.

\begin{lemma}[\cite{JLR15}]\label{prop:EFsynth}
	Let $\A$ be a PTA, and let $\somelocs$ be a subset of the locations of~$\A$.
	Assume $\EFsynth(\A, \somelocs)$ terminates with result~$\K$.
	Then $\pval \models \K$ iff $\somelocs$ is reachable in~$\valuate{\A}{\pval}$.
\end{lemma}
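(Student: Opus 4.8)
The plan is to observe that this is not really a standalone statement to be proved from scratch, but rather a direct citation of the result established for \EFsynth{} in~\cite{JLR15}; the task is to explain why that result applies verbatim in the present setting and to sketch the underlying argument. First I would recall the operational description of \EFsynth{}: it performs a breadth- (or depth-) first exploration of the parametric zone graph $\PZG$ starting from $\symbstateinit$, accumulating into a constraint $\K$ the projection $\projectP{\C}$ of every symbolic state $(\loc,\C)$ it encounters with $\loc \in \somelocs$, and recursing on $\Succ(\symbstate)$ while pruning already-visited states. Termination is exactly the hypothesis, so the only content is the equivalence $\pval \models \K \iff \somelocs$ reachable in $\valuate{\A}{\pval}$.

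The argument splits into the two implications, each resting on the standard correspondence between the concrete semantics (the TTS $(\States,\sinit,\flecheRel)$) and the symbolic semantics (the $\PZG$). For soundness ($\Rightarrow$), suppose $\pval \models \K$. Since \EFsynth{} terminated, $\K$ is a finite disjunction of projected constraints $\projectP{\C}$ coming from reachable symbolic states $(\loc,\C)$ with $\loc \in \somelocs$; so $\pval$ satisfies some such $\projectP{\C}$, meaning there is a clock valuation $\clockval$ with $\wv{\clockval}{\pval} \models \C$. One then shows by induction on the length of the symbolic path $\symbstateinit \symbtrans \cdots \symbtrans (\loc,\C)$ that every concrete configuration $(\loc,\clockval)$ with $\wv{\clockval}{\pval}\models\C$ is reachable in $\valuate{\A}{\pval}$: this is the classical ``exactness'' property of parametric zones, using that the symbolic successor operator $\C' = \timelapse{\big(\reset{(\C\land\guard)}{\resets}\land\invariant(\loc')\big)}\land\invariant(\loc')$ captures precisely the effect of a delay-then-discrete transition on the corresponding set of concrete states. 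Hence $\loc$, and therefore $\somelocs$, is reachable in $\valuate{\A}{\pval}$. For completeness ($\Leftarrow$), suppose $\somelocs$ is reachable in $\valuate{\A}{\pval}$, witnessed by a concrete run $\state_0,(\edge_0,d_0),\state_1,\dots,\state_k$ with $\state_k=(\loc,\clockval)$ and $\loc\in\somelocs$. Projecting this run onto the sequence of edges $\edge_0,\dots,\edge_{k-1}$ and following the matching path in $\PZG$ yields a symbolic state $(\loc,\C)$ with $\wv{\clockval}{\pval}\models\C$ (again by induction, this time showing the symbolic run over-approximates — in fact exactly tracks — the concrete one). Since \EFsynth{} explores the whole reachable part of $\PZG$ up to the visited-state fixpoint, and pruning only discards states whose constraint is already subsumed, this $(\loc,\C)$ (or one subsuming it) contributes $\projectP{\C}$ to $\K$; as $\wv{\clockval}{\pval}\models\C$ gives $\pval\models\projectP{\C}$, we get $\pval\models\K$.

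The main obstacle — and the reason one legitimately cites~\cite{JLR15} rather than reproving it — is the exactness lemma for parametric zones under the successor operator, together with the subtle point that the visited-state pruning in \EFsynth{} does not lose valuations: one must argue that whenever a successor is discarded because an equal or larger symbolic state was already in the passed list, the parameter constraint that would have been collected is already accounted for. This monotonicity/subsumption bookkeeping, plus the inclusion check semantics on parametric zones, is the delicate part; everything else is a routine induction on run length. I would therefore present the lemma as recalled from~\cite{JLR15}, noting that the PTA syntax and symbolic semantics used here (\cref{def:uPTA}, \cref{def:PTA:symbolic}) coincide with those of that reference, so that no adaptation of the proof is needed.
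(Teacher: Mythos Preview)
Your proposal is correct and aligns with the paper's treatment: the paper gives no proof whatsoever for this lemma, simply recalling it verbatim from~\cite{JLR15} as an established result. Your additional sketch of the soundness/completeness argument via the concrete--symbolic correspondence on the parametric zone graph is accurate and goes beyond what the paper provides, but since the statement is purely a citation, none of that justification is strictly required here.
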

}

\LongVersion{
	\lemmaJLR{}
}
\section{Timed-opacity problems}\label{section:problem}

Let us first introduce two key concepts to define our notion of opacity.
$\PrivDurReach{\valuate{\A}{\pval}}{\loc}{\loc'}$ (resp.\ $\PubDurReach{\valuate{\A}{\pval}}{\loc}{\loc'}$) is the set of the durations of the runs for which $\loc$ is reachable (resp.\ unreachable) on the way to~$\loc'$.
Formally:
$\PrivDurReach{\valuate{\A}{\pval}}{\loc}{\loc'} = \{ d \mid \exists \varrun $ in $ \valuate{\A}{\pval} $ such that $ d = \duration(\varrun) \land \loc$ is reachable on the way to~$\loc'$ in~$\varrun \}$
and
\(\PubDurReach{\valuate{\A}{\pval}}{\loc}{\loc'} = \{ d \mid \exists \varrun $ in $ \valuate{\A}{\pval} $ such that $ d = \duration(\varrun) \land \loc$ is unreachable on the way to~$\loc'$ in~$\varrun \}\).
\begin{example}\label{example:running:DurReach}
	Consider again the PTA in \cref{figure:example-PTA}, and let $\pval$ be such that $\pval(\param_1) = 1$ and $\pval(\param_2) = 2$.
	We have $\PrivDurReach{\valuate{\A}{\pval}}{\loc_2}{\loc_1} = [1, 3]$
	and
	$\PubDurReach{\valuate{\A}{\pval}}{\loc_2}{\loc_1} = [2, 3]$.
\end{example}

\ea{perhaps I should forbid outgoing transitions from~$\locfinal$ to make things simpler}

\ea{justify the use of the word execution time}

\begin{definition}[timed opacity \wrt{} $\Times$]\label{definition:ET-timed-opacity}
	Given a TA~$\valuate{\A}{\pval}$, a private location~$\locpriv$,
		a target location~$\locfinal$
		and a set of execution times~$\Times$,
		we say that $\valuate{\A}{\pval}$ is \emph{opaque \wrt{} $\locpriv$ on the way to~$\locfinal$ for execution times~$\Times$}
		if $\Times \subseteq \PrivDurReach{\valuate{\A}{\pval}}{\locpriv}{\locfinal} \cap \PubDurReach{\valuate{\A}{\pval}}{\locpriv}{\locfinal}$.
\end{definition}

\ea{NOTE!! modified the value last minute to remove the negative epsilon problem}
\begin{figure*}[tb]
\newcommand{\ratio}{0.5\textwidth}
 
	\centering
	 \footnotesize

\scalebox{.8}{

	\begin{tikzpicture}[scale=1, xscale=2.5, yscale=2.2, auto, ->, >=stealth']
 
		\node[location, initial] at (0,0) (l1) {$\loc_1$};
 
		\node[location] at (1, 0) (l2) {$\loc_2$};
 
		\node[location] at (2, 0) (l3) {$\loc_3$};
 
		\node[location] at (2, 1) (final1) {\styleloc{error}};
 
		\node[location] at (3, 0) (l4) {$\loc_4$};
 
		\node[location, private] at (4, 0) (hidden1) {$\locpriv$};
 
		\node[location] at (4, 1) (hidden2) {$\loc_5$};
 
		\node[location, final] at (5, 0) (final2) {$\locfinal$};
 
		\node[invariant, below=of l1] {$\styleclock{cl} \leq \styleparam{\epsilon}$};
		\node[invariant, below=of l2] {$\styleclock{cl} \leq \styleparam{\epsilon}$};
		\node[invariant, below=of l3] {$\styleclock{cl} \leq \styleparam{\epsilon}$};
		\node[invariant, below=of l4] {$\styleclock{cl} \leq \styleparam{\epsilon}$};
		
		\path (l1) edge node[align=center]{$\styleclock{cl} \leq \styleparam{\epsilon}$ \\ $\styleact{setupserver}$} node[below] {$\styleclock{cl} := 0$} (l2);
 
		\path (l2) edge node[align=center]{$\styleclock{cl} \leq \styleparam{\epsilon}$ \\  $\styleact{read?}\styledisc{x}$} node[below] {$\styleclock{cl} := 0$} (l3);

		\path (l3) edge node[above left, align=center]{$\styleclock{cl} \leq \styleparam{\epsilon}$ \\  $ \styledisc{x} < 0$} (final1);

		\path (l3) edge node[align=center]{$\styleclock{cl} \leq \styleparam{\epsilon}$ \\  $ \styledisc{x} \geq 0$ } node[below] {$\styleclock{cl} := 0$} (l4);

		\path (l4) edge node{\begin{tabular}{@{} c @{\ } c@{} }
		& $ \styledisc{x} \leq \styledisc{secret}$\\
		 $\land $ & $ \styleclock{cl} \leq \styleparam{\epsilon}$\\
		\end{tabular}} node [below, align=center]{$\styleclock{cl}:=0$} (hidden1);

		\path (l4) edge[bend left] node{\begin{tabular}{@{} c @{\ } c@{} }
		& $ \styledisc{x} > \styledisc{secret}$\\
		 $\land $ & $ \styleclock{cl} \leq \styleparam{\epsilon}$\\
		 & $\styleclock{cl}:=0$\\
		\end{tabular}} (hidden2);

		\path (hidden1) edge node{\begin{tabular}{@{} c @{\ } c@{} }
		& $ 32^2  \leq \styleclock{cl}$\\
		$\land$ & $ \styleclock{cl} \leq 32^2 + \styleparam{\epsilon}$\\
		\end{tabular}} (final2);

		\path (hidden2) edge[bend left] node{\begin{tabular}{@{} c @{\ } c@{} }
		& $ \styleparam{p} \times 32^2  \leq \styleclock{cl}$\\
		$\land$ & $ \styleclock{cl} \leq \styleparam{p} \times 32^2 + \styleparam{\epsilon}$\\
		\end{tabular}} (final2);
 
	\end{tikzpicture}
	
	}
	
	\caption{A Java program encoded in a PTA}
	\label{figure:example-Java:PTA}

\end{figure*}
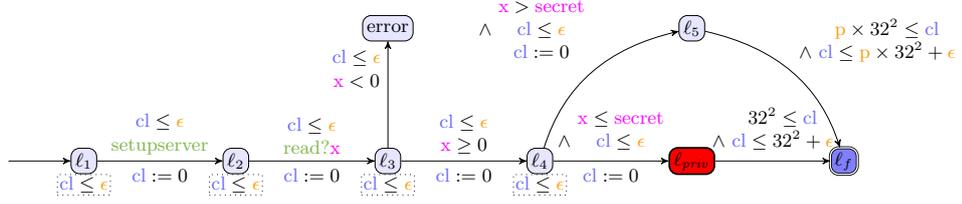

\begin{example}\label{example:Java-PTA}
	Consider the PTA~$\A$ in \cref{figure:example-Java:PTA} where $\styleclock{cl}$ is a clock, while $\styleparam{\epsilon},\styleparam{p}$ are parameters. %
	We use a sightly extended PTA syntax:
	$\styleact{read?}\styledisc{x}$ reads the value input on a given channel~$\styleact{read}$, and assigns it to a (discrete, global) variable~$\styledisc{x}$.
	$\styledisc{secret}$ is a constant variable of arbitrary value.
	If both $\styledisc{x}$ and $\styledisc{secret}$ are finite-domain variables (\eg{} bounded integers) then they can be seen as syntactic sugar for locations.
	Such variables are supported by most model checkers, including \uppaal{} and \imitator{}.
	
	This PTA encodes a server process
	from the DARPA Space/Time Analysis for Cybersecurity (STAC) library\LongVersion{\footnote{%
		\url{https://github.com/Apogee-Research/STAC/blob/master/Canonical_Examples/Source/Category1_vulnerable.java}
	}},
	that compares a user-input variable with a given secret
	and performs different actions taking different times depending on this secret.\ea{link to arXiv}
	\LongVersion{The Java code is given in \cref{appendix:Java}.
	}In our encoding, a single instruction takes a time in $[0, \styleparam{\epsilon}]$, while $\styleparam{p}$ is a (parametric) factor to one of the \stylecode{sleep} instructions of the program (originally, $\pval(\styleparam{p}) = 2$).
	For sake of simplicity, we abstract away instructions not related to time, and merge subfunctions calls.
	
	Fix $\pval(\styleparam{\epsilon}) = 1$, $\pval(\styleparam{p}) = 2$%
	.
	For this example,
		$\PrivDurReach{\valuate{\A}{\pval}}{\locpriv}{\locfinal} = [1024, 1029]$
		while
		$\PubDurReach{\valuate{\A}{\pval}}{\locpriv}{\locfinal} = [2048, 2053]$.
	Therefore, $\valuate{\A}{\pval}$ is opaque \wrt{} $\locpriv$ on the way to~$\locfinal$ for execution times~$\Times = [1024, 1029] \cap [2048, 2053] = \emptyset$.

	Now fix $\pval(\styleparam{\epsilon}) = 2$,	$\pval(\styleparam{p}) = 1.002$.
		$\PrivDurReach{\valuate{\A}{\pval}}{\locpriv}{\locfinal} = [1024, 1034]$
		while
		$\PubDurReach{\valuate{\A}{\pval}}{\locpriv}{\locfinal} = [1026.048, 1036.048]$.
	Therefore, $\valuate{\A}{\pval}$ is opaque \wrt{} $\locpriv$ on the way to~$\locfinal$ for execution times~$\Times = [1026.048, 1034]$.
	
\end{example}

\LongVersion{%
\subsection{Computation problems}
}

We can now define the timed-opacity computation problem, which consists in computing the possible execution times ensuring opacity \wrt{} a private location.
In other words, the attacker model is as follows: the attacker has only access to the computation time between the start of the program and the time it reaches a given (final) location.

\smallskip

\defProblem
	{Timed-opacity Computation}
	{A TA~$\valuate{\A}{\pval}$, a private location~$\locpriv$,
		a target location~$\locfinal$
		}
	{Compute the execution times~$\Times$ for which $\valuate{\A}{\pval}$ is opaque \wrt{} $\locpriv$ on the way to~$\locfinal$ for execution times~$\Times$}

\medskip

The synthesis counterpart allows for a higher-level problem by also synthesizing the internal timings guaranteeing opacity.

\smallskip

\defProblem
	{Timed-opacity Synthesis}
	{A PTA~$\A$, a private location~$\locpriv$,
		a target location~$\locfinal$
		}
	{Synthesize the parameter valuations~$\pval$ and the execution times~$\Times$ for which $\valuate{\A}{\pval}$ is opaque \wrt{} $\locpriv$ on the way to~$\locfinal$ for execution times~$\Times$}

Note that the execution times can depend on the parameter valuations.

\todo{example based on \cref{figure:example-PTA}?}

\section{Timed-opacity computation for timed automata}\label{section:TA}
\subsection{Answering the timed-opacity computation problem}

\todo{compare with existing results on non-interference (undecidable?)}

\ea{problem: control is sometimes decidable in~\cite{GMR07} but they control high-level only, and we'd like to control the only low-level final action (reaching the target location).}

\ea{Sun Jun: technically, I'd like to show that it's PSPACE-complete, but if I write so, I guess we would have to show that it's PSPACE-hard, and I have no idea how to do so…
I'm not really experienced with these decidability/complexity issues, so if you seen anything wrong, please let me know!
}

\begin{proposition}[timed-opacity computation]\label{proposition:ET-opacity-computation}
	The timed-opacity computation problem is solvable for TAs.
\end{proposition}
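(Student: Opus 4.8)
The plan is to reduce the timed-opacity computation problem to two reachability-timing computations on timed automata. First, observe that, by \cref{definition:ET-timed-opacity}, the statement ``$\valuate{\A}{\pval}$ is opaque \wrt{}~$\locpriv$ on the way to~$\locfinal$ for execution times~$\Times$'' is preserved when $\Times$ is replaced by any subset; hence the set of \emph{all} execution times witnessing opacity is precisely $\Times^\star = \PrivDurReach{\valuate{\A}{\pval}}{\locpriv}{\locfinal} \cap \PubDurReach{\valuate{\A}{\pval}}{\locpriv}{\locfinal}$, and it suffices to show that each of these two duration sets is effectively computable (say, as a finite union of intervals with rational endpoints), since their intersection is then computable as well.

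To compute $\PrivDurReach{\valuate{\A}{\pval}}{\locpriv}{\locfinal}$ I would use a product construction with finite extra memory: take two copies of~$\valuate{\A}{\pval}$, one for ``$\locpriv$ not yet visited'' and one for ``$\locpriv$ already visited'', redirect every edge entering~$\locpriv$ to the second copy, keep all other edges within the current copy, and remove~$\locfinal$ and its incoming edges from the first copy so that $\locfinal$ can only be reached after~$\locpriv$. Add a fresh clock~$\clockabs$ that is never reset, so that along every run its value equals the run's~$\duration$. Assuming (without loss of generality, consistently with its intended role as the completion location, or else after redirecting it through an instantaneous edge to a fresh terminal copy) that $\locfinal$ is terminal, $\PrivDurReach{\valuate{\A}{\pval}}{\locpriv}{\locfinal}$ is then exactly the projection onto~$\clockabs$ of the clock valuations reachable at the ``visited'' copy of~$\locfinal$. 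Symmetrically, $\PubDurReach{\valuate{\A}{\pval}}{\locpriv}{\locfinal}$ is obtained from the copy of~$\valuate{\A}{\pval}$ in which~$\locpriv$ has been deleted, by reaching~$\locfinal$.

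Both sets are therefore instances of the following fact, which I expect to be the crux: for a timed automaton~$\mathcal{B}$ with a distinguished never-reset clock~$\clockabs$ and a target location, the projection onto~$\clockabs$ of the reachable clock valuations at that location is an effectively computable finite union of intervals with rational endpoints. The subtlety is that $\clockabs$ is unbounded, so a naive zone-graph exploration of~$\mathcal{B}$ need not terminate; I would instead derive this from decidability of reachability for TAs together with a finite analysis of the region graph of~$\mathcal{B}$ (each simple path to a target region yields an interval whose endpoints optimize the total elapsed time under the linear region constraints, and iterating a cycle yields an arithmetic-progression-like family of intervals that coalesces into finitely many, possibly right-unbounded, intervals), or, alternatively, phrase the computation as a terminating call to $\EFsynth$ on~$\mathcal{B}$ augmented with one parameter~$\paramabs$ and the guard $\clockabs = \paramabs$ on the edge leading to the target. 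The remaining steps — the product construction itself and intersecting the two finite unions of intervals to obtain~$\Times^\star$ — are routine.
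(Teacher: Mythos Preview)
Your proposal takes essentially the same approach as the paper: add a never-reset clock~$\clockabs$, record via finite memory whether~$\locpriv$ has been visited (the paper uses a Boolean flag~$\bflag$, which is equivalent to your two-copy construction), read off the admissible $\clockabs$-values at~$\locfinal$ in each case, and intersect. The paper handles the unbounded~$\clockabs$ more directly than your path-and-cycle analysis: it simply invokes the standard region-graph extrapolation (all integer values beyond the maximal constant~$k$ fall into one equivalence class), so the region graph is finite and projecting each target region onto~$\clockabs$ immediately yields a finite union of intervals. Note also that your second alternative---calling $\EFsynth$ with a fresh parameter~$\paramabs$ and guard $\clockabs = \paramabs$---is exactly what the paper does \emph{in practice} (\cref{section:synthesis}), but it does not by itself constitute a proof of solvability, since $\EFsynth$ on general PTAs is not guaranteed to terminate.
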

\newcommand{\proofETopacityComputation}{
\begin{proof}
	\ea{OK, I got it: we split the final location into final-via-priv and final-via-pub.
	We add an absolute clock (note: it won't ``diverge'' thanks to the classical extrapolation).
	We compute all values for this absolute clock reaching each final location.
	We take the intersection: it gives us the secure computation times.
	}%
	Let~$\A$ be a TA.
	We aim at exhibiting the execution times~$\Times$ for which $\A$ is opaque \wrt{} $\locpriv$ on the way to~$\locfinal$ for~$\Times$.
	We show in the following that this can be obtained from the region graph, the construction of which is EXPSPACE for timed automata~\cite{AD94}.
	
	We modify the TA as follows.
	First, let us add a new clock $\clockabs$, which is never reset in the TA.
	Second, we add a Boolean discrete variable~$\bflag$, initially $\BFalse$.
	Recall that discrete variables over a finite domain are syntactic sugar for \emph{locations}: therefore, $\locfinal$ with $\bflag = \BFalse$ and $\locfinal$ with $\bflag = \BTrue$ can be seen as two different locations.
	Then, we set $\bflag := \BTrue$ on any transition leading to~$\locpriv$; therefore, $\bflag = \BTrue$ denotes that $\locpriv$ has been visited.
	
	We can now compute $\PubDurReach{\valuate{\A}{\pval}}{\locpriv}{\locfinal}$ and $\PrivDurReach{\valuate{\A}{\pval}}{\locpriv}{\locfinal}$ from the region graph of this modified TA, \ie{} values of~$\clockabs$ reaching $\locfinal$ with $\bflag = \BFalse$ (resp.\ $\bflag = \BTrue$), as follows.
	For each region the discrete part of which is $\locfinal$ with $\bflag = \BFalse$ (resp.\ $\bflag = \BTrue$), gather the clock constraints; they come in the form of an integer part, and constraints on the fractional parts of the form $ \clock_i - \clock_j \sim c $ or $ \clock_i \sim c $ or $ - \clock_i \sim c $ for ${\sim} \in \{<, \leq\}$ and $c \in \grandn$.
	We then apply variable elimination by existential quantification to keep only constraints over~$\clockabs$ and obtain the set of (integer and fractional) valuations of~$\clockabs$ such that $\locfinal$ with $\bflag = \BFalse$ (resp.\ $\bflag = \BTrue$) is reachable.
	Recall that, from the region graph semantics~\cite{AD94}, the integer part can take a finite number of values, thanks to the use of an extrapolation (in its simplest form, all integer values below~$k$ must be enumerated, while values above $k$ are in the same equivalence class---where $k$ is the largest integer constant of the TA).
	This gives a finite graph, and therefore the values of $\clockabs$ reaching $\locfinal$ with $\bflag = \BFalse$ (resp.\ $\bflag = \BTrue$) can be represented as a finite set of (possibly punctual) intervals.

	After computing $\PubDurReach{\valuate{\A}{\pval}}{\locpriv}{\locfinal}$ and $\PrivDurReach{\valuate{\A}{\pval}}{\locpriv}{\locfinal}$, we can directly apply \cref{definition:ET-timed-opacity}:
	we perform the intersection of the valuations of~$\clockabs$ for which $\locfinal$ with $\bflag = \BFalse$ is reachable together with these for which $\locfinal$ with $\bflag = \BTrue$ is reachable, which gives the maximum set of execution times~$\Times$ for which $\valuate{\A}{\pval}$ is opaque \wrt{} $\locpriv$ on the way to~$\locfinal$ for execution times~$\Times$.

	Finally note that, while correct in theory, our construction could be largely improved in practice with the zone graph construction with appropriate extrapolations\LongVersion{ (\eg{} \cite{BBLP06,HSW16})}.
	Also note that our practical method will be different from this proof, reducing to reachability in a parametric model.
\end{proof}
}
\LongVersion{\proofETopacityComputation{}}

This positive result can be put in perspective with the negative result of~\cite{Cassez09}, that proves that it is undecidable whether a TA (and even the more restricted subclass of event-recording automata\LongVersion{~\cite{AFH99}}) is opaque, in a sense that the attacker can deduce some actions, by looking at observable actions together with their timing.
The difference in our setting is that only the global time is observable, which can be seen as a single action, occurring once only at the end of the computation.
In other words, our attacker is less powerful than the attacker in~\cite{Cassez09}.

\subsection{Checking for timed-opacity}

If one does not have the ability to tune the system (\ie{} change internal delays, or add some \stylecode{sleep()} or \stylecode{Wait()} statements in the program), one may be first interested in knowing whether the system is opaque for all execution times.

\begin{definition}[timed opacity]\label{definition:opacity}
	Given a TA~$\valuate{\A}{\pval}$, a private location~$\locpriv$ and
		a target location~$\locfinal$,
		we say that $\valuate{\A}{\pval}$ is \emph{opaque \wrt{} $\locpriv$ on the way to~$\locfinal$}
		if $\PrivDurReach{\valuate{\A}{\pval}}{\locpriv}{\locfinal} = \PubDurReach{\valuate{\A}{\pval}}{\locpriv}{\locfinal}$.
\end{definition}

That is, a system is opaque if, for any execution time~$d$, a run of duration~$d$ reaches~$\locfinal$ after passing by~$\locpriv$ iff another run of duration~$d$ reaches~$\locfinal$ without passing by~$\locpriv$.

\begin{remark}\label{remark:R2}
	This definition is symmetric: a system is not opaque iff an attacker can deduce $\locpriv$ or $\neg \locpriv$.
	For instance, if there is no path through $\locpriv$ to $\locfinal$, but a path to $\locfinal$, a system is not opaque \wrt{} \cref{definition:opacity}.
\end{remark}

As we have a procedure to compute $\PrivDurReach{\valuate{\A}{\pval}}{\locpriv}{\locfinal}$ and $\PubDurReach{\valuate{\A}{\pval}}{\locpriv}{\locfinal}$, (see \cref{proposition:ET-opacity-computation}),
\cref{definition:opacity} gives an immediate procedure to decide timed opacity.
Note that, from the finiteness of the region graph, $\PrivDurReach{\valuate{\A}{\pval}}{\locpriv}{\locfinal}$ and $\PubDurReach{\valuate{\A}{\pval}}{\locpriv}{\locfinal}$ come in the form of a finite union of intervals, and their equality can be effectively computed.

\begin{example}
	Consider again the PTA~$\A$ in \cref{figure:example-PTA}, and let $\pval$ be such that $\pval(\param_1) = 1$ and $\pval(\param_2) = 2$.
	Recall from \cref{example:running:DurReach} that
		$\PrivDurReach{\valuate{\A}{\pval}}{\loc_2}{\loc_1} = [1, 3]$
	and
	$\PubDurReach{\valuate{\A}{\pval}}{\loc_2}{\loc_1} = [2, 3]$.
	Thus, $\PrivDurReach{\valuate{\A}{\pval}}{\loc_2}{\loc_1} \neq \PubDurReach{\valuate{\A}{\pval}}{\loc_2}{\loc_1}$ and therefore $\valuate{\A}{\pval}$ is \emph{not} opaque \wrt{} $\loc_2$ on the way to~$\loc_1$.
	
	Now, consider $\pval'$ such that $\pval'(\param_1) = \pval'(\param_2) = 1.5$.
	This time, $\PrivDurReach{\valuate{\A}{\pval'}}{\loc_2}{\loc_1} = \PubDurReach{\valuate{\A}{\pval'}}{\loc_2}{\loc_1} = [1.5, 3]$ and therefore $\valuate{\A}{\pval'}$ is opaque \wrt{} $\loc_2$ on the way to~$\loc_1$.
\end{example}
\section{Decidability and undecidability}\label{section:theory}

We address here the following decision problem, that asks about the emptiness of the parameter valuations and execution times set guaranteeing timed opacity.

\smallskip

\defProblem
	{Timed-opacity Emptiness}
	{A PTA~$\A$, a private location~$\locpriv$,
		a target location~$\locfinal$
		}
	{Is the set of valuations~$\pval$ such that $\valuate{\A}{\pval}$ is opaque \wrt{} $\locpriv$ on the way to~$\locfinal$ for a non-empty set of execution times empty?}

Dually, we are interested in deciding whether there exists at least one parameter valuation for which $\valuate{\A}{\pval}$ is opaque for at least some execution time.

\LongVersion{
\subsection{Undecidability in general}
}

With the rule of thumb that all non-trivial decision problems are undecidable for general PTAs~\cite{Andre19STTT}, the following result is not surprising, and follows from the undecidability of reachability-emptiness for PTAs.

\begin{proposition}[undecidability]\label{proposition:undecidability}
	The timed-opacity emptiness problem is undecidable for general PTAs.
\end{proposition}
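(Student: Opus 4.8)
The plan is to reduce from the reachability-emptiness problem for PTAs, which is known to be undecidable~\cite{AHV93,Andre19STTT}. Given an arbitrary PTA~$\A$ and a target location~$\locTarget$, I want to construct in polynomial time a PTA~$\A'$ together with a private location~$\locpriv$ and a final location~$\locfinal$ such that $\A'$ is opaque \wrt{} $\locpriv$ on the way to~$\locfinal$ for a non-empty set of execution times for some parameter valuation~$\pval$ if and only if $\locTarget$ is reachable in~$\valuate{\A}{\pval}$ for some~$\pval$. Since reachability-emptiness is undecidable, timed-opacity emptiness will be undecidable as well.

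The key idea for the gadget is to make the ``public'' branch (reaching~$\locfinal$ without passing through~$\locpriv$) trivially available for \emph{all} execution times, so that opacity for some execution time reduces exactly to the existence of \emph{some} private run reaching~$\locfinal$, which in turn encodes reachability of~$\locTarget$. Concretely, I would build $\A'$ as follows. Take a fresh copy of~$\A$, add a fresh clock~$\clockabs$ (never reset) if needed for bookkeeping, and add two fresh locations~$\locpriv$ and~$\locfinal$. From~$\locTarget$ add an edge with a true guard (no reset) to~$\locpriv$, and from~$\locpriv$ add an edge with true guard to~$\locfinal$; since~$\clockabs$ is never reset, the duration of such a run equals the time spent to reach~$\locTarget$ in the original automaton. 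For the public branch, add from the initial location~$\locinit$ a self-loop-free ``delay'' component: a fresh location~$\locpub$ reachable from~$\locinit$ via a true-guarded edge resetting all clocks, with an invariant~$\CTrue$ and an edge with true guard to~$\locfinal$; this component admits runs to~$\locfinal$ of every duration in~$\Rgeqzero$. Hence $\PubDurReach{\valuate{\A'}{\pval}}{\locpriv}{\locfinal} = \Rgeqzero$ for every~$\pval$, and therefore $\valuate{\A'}{\pval}$ is opaque \wrt{} $\locpriv$ on the way to~$\locfinal$ for a non-empty set of execution times iff $\PrivDurReach{\valuate{\A'}{\pval}}{\locpriv}{\locfinal} \neq \emptyset$, iff there is a run reaching~$\locpriv$ (hence~$\locTarget$), iff $\locTarget$ is reachable in~$\valuate{\A}{\pval}$.

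I would then prove the two directions of the equivalence formally: (i) if $\locTarget$ is reachable in~$\valuate{\A}{\pval}$ via some run of duration~$d$, then prolonging that run through~$\locpriv$ to~$\locfinal$ gives a private run of duration~$d \in \PrivDurReach{\valuate{\A'}{\pval}}{\locpriv}{\locfinal}$, and since $d \in \Rgeqzero = \PubDurReach{\valuate{\A'}{\pval}}{\locpriv}{\locfinal}$, the intersection is non-empty, so $\Times = \{d\}$ witnesses opacity; (ii) conversely, if the opacity set is non-empty for~$\pval$, any witnessing execution time lies in~$\PrivDurReach{\valuate{\A'}{\pval}}{\locpriv}{\locfinal}$, so some run of~$\valuate{\A'}{\pval}$ visits~$\locpriv$, and by construction~$\locpriv$ is reachable only through~$\locTarget$, so $\locTarget$ is reachable in~$\valuate{\A}{\pval}$. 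This establishes that deciding timed-opacity emptiness would decide reachability-emptiness, a contradiction.

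The main obstacle, such as it is, is purely in getting the gadget details right rather than in any deep argument: one must ensure that the newly added public branch does not interfere with the copy of~$\A$ (disjoint fresh locations, and care that no action synchronization or shared clock spuriously blocks runs), that the public branch genuinely realizes \emph{all} durations (an unconstrained invariant and an always-enabled exit edge suffice), and that~$\locpriv$ is entered \emph{only} from~$\locTarget$ so that private reachability is equivalent to reachability of~$\locTarget$. One should also handle the degenerate case where~$\locfinal$ itself might have outgoing transitions in a more liberal formalism; here we simply make~$\locfinal$ a fresh sink, matching the informal convention suggested earlier in the paper. With these precautions the reduction is immediate and the undecidability follows.
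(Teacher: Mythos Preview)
Your proposal is correct and follows essentially the same approach as the paper: both reduce from reachability-emptiness for PTAs by appending a private location after the target location of~$\A$ and adding a parallel ``public'' branch to a fresh final location that realizes every duration in~$\Rgeqzero$, so that opacity for a non-empty set of execution times is equivalent to reachability of the original target. The only cosmetic differences are that the paper introduces a fresh urgent initial location~$\locinit'$ branching to~$\locinit$ and~$\locpub$ (and makes the intermediate locations urgent), whereas you attach the public branch directly to~$\locinit$ and rely on an unconstrained invariant in~$\locpub$ to absorb all delays; both variants work.
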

\newcommand{\proofUndecidability}{
\begin{proof}
	We reduce from the reachability-emptiness problem, \ie{} the existence of a parameter valuation reaching a given location in a PTA, which is undecidable~\ShortVersion{\cite{AHV93}}\LongVersion{\cite{AHV93,Miller00,Doyen07,JLR15,BBLS15}}.
	Consider an arbitrary PTA~$\A$ with initial location~$\locinit$ and a given location~$\locfinal$.
	It is undecidable whether there exists a parameter valuation for which there exists a run reaching~$\locfinal$.
	(Proofs of undecidability in the literature generally reduce from the halting problem of a 2-counter machine, so one can see~$\A$ as an encoding of a 2-counter machine.)
	Now, add the following locations and transitions (all unguarded) as in \cref{figure:undecidability}:
		a new urgent\footnote{%
			Where time cannot elapse (depicted in dotted yellow in our figures).
		} initial location~$\locinit'$ with outgoing transitions to~$\locinit$ %
			and to a new location~$\locpub$;
		a new urgent location $\locpriv$ with an %
			incoming transition from~$\locfinal$; %
		a new final location~$\locfinal'$ with incoming transitions from~$\locpriv$ %
			and~$\locpub$. %
	Also, $\locfinal$ is made urgent.
	Let~$\A'$ denote this new PTA.

	First note that, due to the unguarded transitions, $\locfinal'$ is reachable for any parameter valuation and for any execution time by runs passing by~$\locpub$ and not passing by~$\locpriv$.
	That is, for all~$\pval$, $\PubDurReach{\valuate{\A'}{\pval}}{\locpriv}{\locfinal'} = [0, \infty)$.
	
	Assume there exists some parameter valuation~$\pval$ such that $\locfinal$ is reachable from~$\locinit$ in~$\valuate{\A}{\pval}$ for some execution times~$\Times$: then, due to our construction with additional urgent locations, $\locpriv$ is reachable on the way to $\locfinal'$ in~$\valuate{\A'}{\pval}$ for the exact same execution times~$\Times$.
	Therefore, $\valuate{\A}{\pval}$ is opaque \wrt{} $\locpriv$ on the way to~$\locfinal'$ for execution times~$\Times$.
	
	Conversely, if $\locfinal$ is not reachable from~$\locinit$ in~$\A$ for any valuation, then $\locpriv$ is not reachable on the way to $\locfinal'$ for any valuation in~$\A'$.
	Therefore, there is no valuation~$\pval$ such that $\valuate{\A}{\pval}$ is opaque \wrt{} $\locpriv$ on the way to~$\locfinal'$ for any execution time.
	Therefore, there exists a valuation~$\pval$ such that $\valuate{\A}{\pval}$ is opaque \wrt{} $\locpriv$ on the way to~$\locfinal'$ iff $\locfinal$ is reachable in~$\A$---which is undecidable.
\end{proof}
}
\LongVersion{\proofUndecidability}
\begin{figure}[tb]
	{\centering
	\begin{tikzpicture}[->, >=stealth', auto, node distance=2cm, thin]

		\node[location] (l0) at (-2, 0) {$\locinit$};
		\node[location, urgent] (lf) at (+1.8, 0) {$\locfinal$};
		\node[cloud, cloud puffs=15.7, cloud ignores aspect, minimum width=5cm, minimum height=2cm, align=center, draw] (cloud) at (0cm, 0cm) {$\A$};

		\node[location, urgent, initial] (l0') at (-3.5, 0) {$\locinit'$};
		\node[location, urgent, private] (lpriv) at (+3.5, 0) {$\locpriv$};
		\node[location] (lpub) at (+2, -1.2) {$\locpub$};
		\node[location, final] (lf') at (+3.5, -1.2) {$\locfinal'$};

		\path
			(l0') edge (l0) %
			(lf) edge (lpriv) %
			(lpriv) edge (lf') %
			(l0') edge[out=-45,in=180] (lpub)
			(lpub) edge (lf')
			;

	\end{tikzpicture}

	}
	\caption{Reduction from reachability-emptiness} %
	\label{figure:undecidability}
\end{figure}
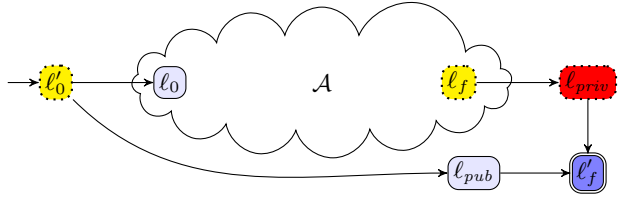

\LongVersion{
\subsection{A decidable subclass}
}

We now show that the timed-opacity emptiness problem is decidable for the subclass of PTAs called L/U-PTAs~\cite{HRSV02}.
Despite early positive results for L/U-PTAs\LongVersion{~\cite{HRSV02,BlT09}}, more recent results\LongVersion{ (notably \cite{JLR15,AM15,ALime17,ALR18FORMATS})} mostly proved undecidable properties of L/U-PTAs\ShortVersion{~\cite{Andre19STTT}}, and therefore this positive result is welcome.

\LongVersion{
	\paragraph{Syntax of L/U-PTAs}
}

\begin{definition}[L/U-PTA]\label{def:LUPTA} %
	An \emph{L/U-PTA} is a PTA where the set of parameters is partitioned into lower-bound parameters and upper-bound parameters,
	where each upper-bound (resp.\ lower-bound) parameter~$\param_i$ must be such that, 
    for every guard or invariant constraint $\clock \compOp \sum_{1 \leq i \leq \ParamCard} \alpha_i \param_i + d$, we have: $\alpha_i > 0$ implies ${\compOp} \in \{ \leq, < \}$ (resp.\ ${\compOp} \in \{ \geq, > \}$).
\end{definition}
\begin{example}
	The PTA in \cref{figure:example-PTA} is an L/U-PTA with $\{ \param_1, \param_2 \}$ as lower-bound parameters, and $\emptyset$ as upper-bound parameters.
	
	The PTA in \cref{figure:example-Java:PTA} is not an L/U-PTA, because $\styleparam{\param}$ is compared to $\styleclock{cl}$ both as a lower-bound (in ``$\styleparam{p} \times 32^2 \leq \styleclock{cl}$'') and as an upper-bound (``$\styleclock{cl} \leq \styleparam{p} \times 32^2 + \styleparam{\epsilon}$'').
\end{example}
\begin{theorem}[decidability]\label{proposition:decidability}
	The timed-opacity emptiness problem is decidable for L/U-PTAs.
\end{theorem}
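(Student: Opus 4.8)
**The plan is to reduce timed-opacity emptiness for L/U-PTAs to a reachability-emptiness question, and then exploit the known monotonicity properties of L/U-PTAs.**

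First I would reuse the construction from the proof of \cref{proposition:ET-opacity-computation}: duplicate the target location into $\locfinal^{\mathit{priv}}$ (reached by runs that passed through $\locpriv$) and $\locfinal^{\mathit{pub}}$ (reached by runs that did not), using a Boolean flag $\bflag$ set to $\BTrue$ on every edge into $\locpriv$ (recall $\bflag$ is syntactic sugar for locations), and add an absolute clock $\clockabs$ that is never reset. The key observation is that, for a fixed valuation $\pval$, $\valuate{\A}{\pval}$ is opaque \wrt{} $\locpriv$ on the way to $\locfinal$ for a \emph{non-empty} set of execution times iff there is a duration $d$ realised by both a private run and a public run, i.e.\ iff $\PrivDurReach{}{}{} \cap \PubDurReach{}{}{} \neq \emptyset$. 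Equivalently, in the product of two copies of the modified automaton sharing the parameters and synchronising only on a final "stop" action that forces both copies to have the same value of $\clockabs$, the location pair $(\locfinal^{\mathit{priv}}, \locfinal^{\mathit{pub}})$ is reachable. Crucially, this self-composition stays an L/U-PTA: taking two copies of an L/U-PTA and synchronising them preserves the partition of parameters into lower- and upper-bound ones (a parameter that is only an upper bound in $\A$ is only an upper bound in both copies, and the synchronising guard $\clockabs^{(1)} = \clockabs^{(2)}$ can be written with two inequalities each using only clocks, no parameters). So the timed-opacity emptiness problem reduces to reachability-emptiness on an L/U-PTA.

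Next I would invoke the decidability of reachability-emptiness for L/U-PTAs from \cite{HRSV02}: the problem "does there exist $\pval$ such that location $\locTarget$ is reachable in $\valuate{\A}{\pval}$" is decidable for L/U-PTAs, because reachability is monotonic — enlarging upper-bound parameters and shrinking lower-bound parameters can only add behaviours — so it suffices to test reachability in the (non-parametric) TA obtained by instantiating all lower-bound parameters to $0$ and all upper-bound parameters to $\infty$ (i.e.\ removing the corresponding constraints), which is the automaton $\Azeroinf$. Applying this to our self-composed automaton yields the decision procedure: build $\Azeroinf$ for the product, check reachability of $(\locfinal^{\mathit{priv}}, \locfinal^{\mathit{pub}})$, and answer accordingly.

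\textbf{The main obstacle} is the monotonicity/self-composition step. One must be careful that the shared absolute clock and the equality constraint on $\clockabs$ in the two copies do not introduce a parameter comparison of the wrong polarity, and that identifying parameters across the two copies is legitimate (it is: an L/U-PTA may reuse the same parameter many times, provided each \emph{occurrence} respects its class, which is preserved here). A second subtlety is that $\PrivDurReach$ and $\PubDurReach$ must be realised by runs reaching $\locfinal$; since the flag construction turns this into plain location reachability in the product, and since instantiating lower bounds to $0$ / upper bounds to $\infty$ only grows the set of reachable (location, duration) pairs in \emph{each} copy simultaneously, the intersection is non-empty for some $\pval$ iff it is non-empty for $\Azeroinf$ — which is exactly the monotonicity argument of \cite{HRSV02} applied to the product. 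Once these points are checked, decidability follows immediately.
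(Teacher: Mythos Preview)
Your approach is correct and reaches the same conclusion via essentially the same monotonicity argument, but you package it differently from the paper. The paper does \emph{not} build the self-composition at the L/U-PTA level; instead it directly replaces lower-bound parameters by~$0$ and upper-bound parameters by~$\infty$ in~$\A$ itself, obtaining a single TA~$\Azeroinf$, and then invokes \cref{proposition:ET-opacity-computation} to decide whether $\PrivDurReach{\Azeroinf}{\locpriv}{\locfinal}\cap\PubDurReach{\Azeroinf}{\locpriv}{\locfinal}\neq\emptyset$. The equivalence with the original emptiness question is argued in both directions using the monotonicity lemma of~\cite{HRSV02} (and, for the $\Leftarrow$ direction, by picking the concrete valuation $0$ for lower-bound parameters and~$d$ for upper-bound parameters, where $d$ is the witnessing duration). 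Your route---self-compose first, observe the product is still L/U, then invoke reachability-emptiness for L/U-PTAs as a black box---is more modular and mirrors the \SynthOp{} construction of \cref{section:synthesis}; the paper's route is more self-contained since it reduces to its own \cref{proposition:ET-opacity-computation}. One small cleanup: the guard $\clockabs^{(1)}=\clockabs^{(2)}$ you mention is not in the single-clock guard syntax of \cref{def:uPTA}, but it is also unnecessary---in the synchronous product time elapses globally, so the two absolute clocks coincide at all times, and synchronising on~$\actionEnd$ alone already forces simultaneous arrival in~$\locfinal$.
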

\newcommand{\proofDecidability}{
\begin{proof}
	We reduce to the timed-opacity computation problem of a given TA, which is decidable (\cref{proposition:ET-opacity-computation}).

	Let $\A$ be an L/U-PTA.
	Let $\Azeroinf$ denote the structure obtained as follows: any occurrence of a lower-bound parameter is replaced with~0, and any occurrence of a conjunct $\clock \compOpLeq \param$ (where $\param$ is necessarily a upper-bound parameter) is deleted, \ie{} replaced with~$\CTrue$.

	Let us show that
		the set of valuations~$\pval$ such that $\valuate{\A}{\pval}$ is opaque \wrt{} $\locpriv$ on the way to~$\locfinal$ for a non-empty set of execution times is non empty
		iff
		the solution to the timed-opacity computation problem for~$\Azeroinf$ is non-empty.
	
	\begin{itemize}
		\item[$\Rightarrow$]
			Assume there exists a valuation~$\pval$ such that $\valuate{\A}{\pval}$ is opaque \wrt{} $\locpriv$ on the way to~$\locfinal$ for a non-empty set of execution.
			Therefore, the solution to the timed-opacity computation problem for~$\Azeroinf$ is non-empty.
			That is, there exists a duration~$d$ such that
				there exists a run of duration~$d$ such that $\locpriv$ is reachable on the way to~$\locfinal$,
			and
				there exists a run of duration~$d$ such that $\locpriv$ is unreachable on the way to~$\locfinal$.
			
			We now need the following monotonicity property of L/U-PTAs:

			\begin{lemma}[\cite{HRSV02}]\label{lemma:HRSV02:prop4.2}
				Let~$\A$ be an L/U-PTA and~$\pval$ be a parameter valuation.
				Let $\pval'$ be a valuation such that
				for each upper-bound parameter~$\param^+$, $\pval'(\param^+) \geq \pval(\param^+)$
				and
				for each lower-bound parameter~$\param^-$, $\pval'(\param^-) \leq \pval(\param^-)$.
				Then any run of~$\valuate{\A}{\pval}$ is a run of $\valuate{\A}{\pval'}$.
			\end{lemma}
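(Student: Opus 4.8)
The plan is to prove the lemma in two stages: a monotonicity property for individual guard and invariant \emph{constraints}, followed by a lift to entire \emph{runs} by induction on their length. The guiding observation is that $\valuate{\A}{\pval}$ and $\valuate{\A}{\pval'}$ share exactly the same discrete skeleton---locations, initial location, actions, resets, and the \emph{syntactic} guards and invariants---so that only the numerical values substituted for the parameters differ. Consequently, a run of $\valuate{\A}{\pval}$, which is a fixed alternating sequence $\state_0, (\edge_0, d_0), \state_1, \dots$ of states, edges and delays, will also be a run of $\valuate{\A}{\pval'}$ as soon as every guard and invariant it relies upon remains satisfied after substituting~$\pval'$.

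First I would establish the constraint-level claim: for every atomic constraint $\clock \compOp \sum_{1 \leq i \leq \ParamCard} \alpha_i \param_i + d$ occurring in a guard or invariant, and every clock valuation~$\clockval$, satisfaction is monotone, i.e.\ if $\clockval$ satisfies the constraint under~$\pval$ then it also satisfies it under~$\pval'$. Writing $f(w) = \sum_i \alpha_i\, w(\param_i) + d$ for the value of the right-hand side under a valuation~$w$, the point is that passing from~$\pval$ to~$\pval'$ shifts~$f$ in the direction that \emph{weakens} the inequality. When ${\compOp} \in \{\leq, <\}$ the constraint upper-bounds~$\clock$, and one checks using \cref{def:LUPTA} that $f(\pval') \geq f(\pval)$, so that $\clockval(\clock) \compOp f(\pval) \leq f(\pval')$ still holds; the case ${\compOp} \in \{\geq, >\}$ is symmetric and gives $f(\pval') \leq f(\pval)$. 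Concretely, the contrapositive of \cref{def:LUPTA} forces every parameter occurring with a positive coefficient in a $\leq$-constraint (resp.\ $\geq$-constraint) to be an upper-bound (resp.\ lower-bound) parameter, whose value moves the helpful way under~$\pval'$; this is exactly where the L/U-partition is consumed.

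Next I would lift this to runs by induction on the number of transitions. Since weakening every invariant enlarges the admissible state set, $\States$ of $\valuate{\A}{\pval}$ is contained in that of $\valuate{\A}{\pval'}$, and in particular the common initial state $\sinit = (\locinit, \ClocksZero)$ lies in both. For the inductive step, take a combined transition $\state_i \longuefleche{(\edge_i, d_i)} \state_{i+1}$ of $\valuate{\A}{\pval}$, factored as a delay followed by the discrete edge~$\edge_i$. The delay requires $\clockval + d' \models \valuate{\invariant(\loc)}{\pval}$ for all $d' \in [0, d_i]$; applying the constraint-level claim to each conjunct of the invariant gives the same under~$\pval'$. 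The discrete step requires the intermediate valuation to satisfy $\valuate{\guard}{\pval}$ for the guard~$\guard$ of~$\edge_i$; the constraint-level claim again upgrades this to $\valuate{\guard}{\pval'}$, and the reset and target-invariant conditions carry over verbatim because resets are identical and invariants only weaken. Hence every transition remains enabled under~$\pval'$, and the identical sequence is a run of $\valuate{\A}{\pval'}$, as required.

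I expect the main obstacle to be this constraint-level monotonicity in full generality: handling right-hand sides $\sum_i \alpha_i \param_i + d$ that combine several parameters with integer coefficients, rather than the single-parameter bounds for which the L/U intuition is immediate. The careful point is to verify that \emph{every} parameter's contribution to~$f$ moves in the weakening direction, which requires tracking the coefficient signs against the parameter types prescribed by \cref{def:LUPTA}. Once this sign analysis is settled, the inductive lift to runs is entirely routine, and the symmetry between the upper- and lower-bound cases lets the two halves of the argument be written once.
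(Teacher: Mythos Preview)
The paper does not give its own proof of this lemma: it is stated with an explicit citation to~\cite{HRSV02} and simply invoked as a known monotonicity property of L/U-PTAs inside the proof of \cref{proposition:decidability}. There is therefore nothing in the paper to compare your proposal against.

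That said, your sketch is exactly the standard argument behind this result: show that each atomic conjunct of a guard or invariant is \emph{weakened} when upper-bound parameters increase and lower-bound parameters decrease, then propagate this through a run by induction on its length, using that both TAs share the same discrete structure. This is how the property is established in~\cite{HRSV02}, so your approach is on target. One small point you leave implicit is the comparator ${\compOp} = {=}$, which the paper's guard syntax allows: an equality cannot be ``weakened'', so monotonicity would fail for a conjunct $x = \sum_i \alpha_i \param_i + d$ with some nonzero~$\alpha_i$. Under \cref{def:LUPTA} as literally written this is not ruled out (the definition constrains only positive coefficients), but in the original~\cite{HRSV02} framework the guard syntax and the L/U partition are set up so that parametric equalities do not occur; this is best read as a slight looseness in the paper's restatement of the definition rather than a gap in your argument.
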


			Therefore, from \cref{lemma:HRSV02:prop4.2}, the runs of~$\valuate{\A}{\pval}$ of duration~$d$ such that $\locpriv$ is reachable (resp.\ unreachable) on the way to~$\locfinal$
				are also runs of~$\Azeroinf$.
			Therefore, there exists a non-empty set of durations such that $\Azeroinf$ is opaque, \ie{} solution to the timed-opacity computation problem for~$\Azeroinf$ is non-empty.
			
		\item[$\Leftarrow$]
			Assume the solution to the timed-opacity computation problem for~$\Azeroinf$ is non-empty.
			That is, there exists a duration~$d$ such that
				there exists a run of duration~$d$ such that $\locpriv$ is reachable on the way to~$\locfinal$ in~$\Azeroinf$,
			and
				there exists a run of duration~$d$ such that $\locpriv$ is unreachable on the way to~$\locfinal$ in~$\Azeroinf$.
			
			The result could follow immediately---if only assigning $0$ and~$\infty$ to parameters was a proper parameter valuation.
			From~\LongVersion{\cite{HRSV02,BlT09}}\ShortVersion{\cite{HRSV02}}, if a location is reachable in the TA obtained by valuating lower-bound parameters with~0 and upper-bound parameters with~$\infty$, then there exists a sufficiently large constant~$C$ such that this run exists in $\valuate{\A}{\pval}$ such that $\pval$ assigns 0 to lower-bound and~$C$ to upper-bound parameters.
			Here, we can trivially pick~$d$, as any clock constraint $\clock \leq d$ will be satisfied for a run of duration~$d$.
			Let $\pval$ assign 0 to lower-bound and~$d$ to upper-bound parameters.
			Then,
				there exists a run of duration~$d$ such that $\locpriv$ is reachable on the way to~$\locfinal$ in~$\valuate{\A}{\pval}$,
			and
				there exists a run of duration~$d$ such that $\locpriv$ is unreachable on the way to~$\locfinal$ in~$\valuate{\A}{\pval}$.
			Therefore, the set of valuations~$\pval$ such that $\valuate{\A}{\pval}$ is opaque \wrt{} $\locpriv$ on the way to~$\locfinal$ for a non-empty set of execution times is non empty---which concludes the proof.
	\end{itemize}
\end{proof}
}
\LongVersion{\proofDecidability{}}
\begin{remark}
	The class of L/U-PTAs is known to be relatively meaningful, and many case studies from the literature fit into this class, including case studies proposed even before this class was defined in~\cite{HRSV02}.
	Even though the PTA in \cref{figure:example-Java:PTA} does not fit in this class, it can easily be transformed into an L/U-PTA, by duplicating $\styleparam{\param}$ into $\styleparam{\param^l}$ (used in lower-bound comparisons with clocks) and $\styleparam{\param^u}$ (used in upper-bound comparisons with clocks).
\end{remark}
\section{Parameter synthesis for opacity}\label{section:synthesis}

Despite the negative theoretical result of \cref{proposition:undecidability}, we now address the timed-opacity synthesis problem for the full class of PTAs.
Our method may not terminate (due to the undecidability) but, if it does, its result is correct.
Our workflow can be summarized as follows.

\begin{enumerate}
	\item We enrich the original PTA by adding a Boolean flag~$\bflag$ and a final synchronization action;
	\item We perform \emph{self-composition} (\ie{} parallel composition with a copy of itself) of this modified PTA;
	\item We perform reachability-synthesis using \EFsynth{} on~$\locfinal$ with contradictory values of~$\bflag$.
\end{enumerate}

We detail each operation in the following.
\LongVersion{

}In this section, we assume a PTA~$\A$, a given private location~$\locpriv$ and a given final location~$\locfinal$.

\subsubsection{Enriching the PTA}

We first add a Boolean flag $\bflag$ initially set to $\BFalse$, and then set to $\BTrue$ on any transition leading to $\locpriv$ (in the line of the proof of \cref{proposition:ET-opacity-computation}).
Therefore, $\bflag = \BTrue$ denotes that $\locpriv$ has been visited.
Second, we add a synchronization action $\actionEnd$ on any transition leading to~$\locfinal$.
Third, we add a new clock $\clockabs$ (never reset) together with a new parameter $\paramabs$, and we guard all transitions to~$\locfinal$ with $\clockabs = \paramabs$.
This will allow to measure the (parametric) execution time.
Let $\Enrich(\A, \locpriv, \locfinal)$ denote this procedure.

\LongVersion{

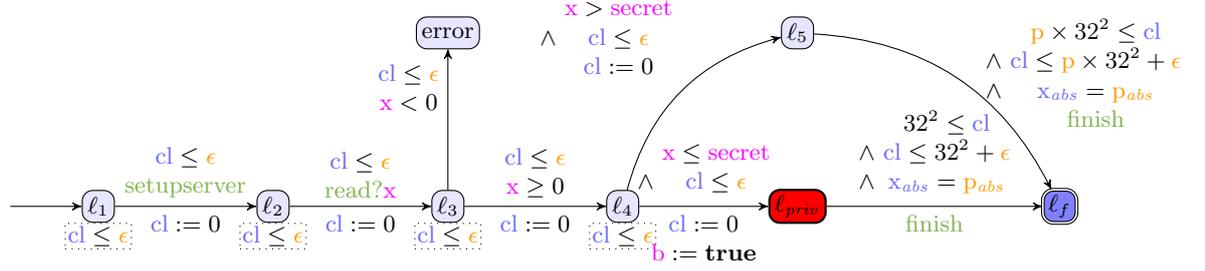
\begin{figure*}[tb]
\newcommand{\ratio}{0.5\textwidth}
 
	\centering
	\footnotesize

	\begin{tikzpicture}[scale=1, xscale=2.3, yscale=2.3, auto, ->, >=stealth']
 
		\node[location, initial] at (0,0) (l1) {$\loc_1$};
 
		\node[location] at (1, 0) (l2) {$\loc_2$};
 
		\node[location] at (2, 0) (l3) {$\loc_3$};
 
		\node[location] at (2, 1) (final1) {\styleloc{error}};
 
		\node[location] at (3, 0) (l4) {$\loc_4$};
 
		\node[location, private] at (4, 0) (hidden1) {$\locpriv$};
 
		\node[location] at (4, 1) (hidden2) {$\loc_5$};
 
		\node[location, final] at (5.5, 0) (final2) {$\locfinal$};
 
		\node[invariant, below=of l1] {$\styleclock{cl} \leq \styleparam{\epsilon}$};
		\node[invariant, below=of l2] {$\styleclock{cl} \leq \styleparam{\epsilon}$};
		\node[invariant, below=of l3] {$\styleclock{cl} \leq \styleparam{\epsilon}$};
		\node[invariant, below=of l4] {$\styleclock{cl} \leq \styleparam{\epsilon}$};
		
		\path (l1) edge node[align=center]{$\styleclock{cl} \leq \styleparam{\epsilon}$ \\ $\styleact{setupserver}$} node[below] {$\styleclock{cl} := 0$} (l2);
 
		\path (l2) edge node[align=center]{$\styleclock{cl} \leq \styleparam{\epsilon}$ \\  $\styleact{read?}\styledisc{x}$} node[below] {$\styleclock{cl} := 0$} (l3);

		\path (l3) edge node[above left, align=center]{$\styleclock{cl} \leq \styleparam{\epsilon}$ \\  $ \styledisc{x} < 0$} (final1);

		\path (l3) edge node[align=center]{$\styleclock{cl} \leq \styleparam{\epsilon}$ \\  $ \styledisc{x} \geq 0$ } node[below] {$\styleclock{cl} := 0$} (l4);

		\path (l4) edge node{\begin{tabular}{@{} c @{\ } c@{} }
		& $ \styledisc{x} \leq \styledisc{secret}$\\
		 $\land $ & $ \styleclock{cl} \leq \styleparam{\epsilon}$\\
		\end{tabular}} node [below, align=center]{$\styleclock{cl}:=0$ \\ $\styledisc{b}:=\CTrue$} (hidden1);

		\path (l4) edge[bend left] node{\begin{tabular}{@{} c @{\ } c@{} }
		& $ \styledisc{x} > \styledisc{secret}$\\
		 $\land $ & $ \styleclock{cl} \leq \styleparam{\epsilon}$\\
		 & $\styleclock{cl}:=0$\\
		\end{tabular}} (hidden2);

		\path (hidden1) edge node{\begin{tabular}{@{} c @{\ } c@{} }
		& $ 32^2  \leq \styleclock{cl}$\\
		$\land$ & $ \styleclock{cl} \leq 32^2 + \styleparam{\epsilon}$\\
		$\land$ & $\styleclock{ \clockabs }= \styleparam{\paramabs}$\\
		\end{tabular}} node[below]{$\actionEnd$} (final2);

		\path (hidden2) edge[bend left] node[right,xshift=8]{\begin{tabular}{@{} c @{\ } c@{} }
		& $ \styleparam{p} \times 32^2 \leq \styleclock{cl}$\\
		$\land$ & $ \styleclock{cl} \leq \styleparam{p} \times 32^2 + \styleparam{\epsilon}$\\
		$\land$ & $\styleclock{ \clockabs }= \styleparam{\paramabs}$\\
		& $\actionEnd$\\
		\end{tabular}} (final2);
 
	\end{tikzpicture}
	\caption{Transformed version of \cref{figure:example-Java:PTA}}
	\label{figure:example-Java:PTA-transformed}

\end{figure*}

\begin{example}
	\cref{figure:example-Java:PTA-transformed} shows the transformed version of the PTA in \cref{figure:example-Java:PTA}.
\end{example}
}

\subsubsection{Self-composition}

We use here the principle of \emph{self-composition}\ea{Sun Jun, any reference where this was discussed?}, \ie{} composing the PTA with a copy of itself.
More precisely, given a PTA~$\A' = \Enrich(\A, \locpriv, \locfinal)$, we first perform an identical copy of~$\A'$ \emph{with distinct variables}: that is, a clock~$\clock$ of~$\A'$ is distinct from a clock~$\clock$ in the copy of~$\A'$---which can be trivially performed using variable renaming.\footnote{%
	In fact, the fresh clock~$\clockabs$ and parameter~$\paramabs$ can be shared to save two variables, as~$\clockabs$ is never reset, and both PTAs enter~$\locfinal$ at the same time, therefore both ``copies'' of~$\clockabs$ and $\paramabs$ always share the same values.
}
Let~$\Copy(\A')$ denote this copy of~$\A'$.
We then compute $\A' \parallel_{\{\actionEnd\}} \Copy(\A')$.
That is, $\A'$ and~$\Copy(\A')$ evolve completely independently due to the interleaving---except that they are forced to enter~$\locfinal$ at the same time, thanks to the synchronization action~$\actionEnd$.

\subsubsection{Synthesis}

Then, we apply reachability synthesis \EFsynth{} (over all parameters, \ie{} the ``internal'' timing parameters, but also the $\paramabs$ parameter) to the following goal location:
the original~$\A'$ is in~$\locfinal$ with $\bflag = \BTrue$ while its copy~$\Copy(\A')$ is in~$\locfinal'$ with $\bflag' = \BFalse$ (primed variables denote variables from the copy).
Intuitively, we synthesize timing parameters and execution times such that there exists a run reaching~$\locfinal$ with $\bflag = \BTrue$ (\ie{} that has visited $\locpriv$) and there exists another run of same duration reaching~$\locfinal$ with $\bflag = \BFalse$ (\ie{} that has not visited $\locpriv$).

Let $\SynthOp(\A, \locpriv, \locfinal)$ denote the entire procedure.
We formalize $\SynthOp$ in \cref{algo:SynthOp}, where ``$\locfinal \land \bflag = \BTrue$'' denotes the location $\locfinal$ with $\bflag = \BTrue$.
Also note that \EFsynth{} is called on a set made of a single location of $\A' \parallel_{\{\actionEnd\}} \Copy(\A')$; by definition of the synchronous product, this location is a \emph{pair} of locations, one from~$\A'$ (\ie{} ``$\locfinal \land \bflag = \BTrue$'') and one from~$\Copy(\A')$ (\ie{} ``$\locfinal' \land \bflag' = \BFalse$'').

\begin{algorithm}[tb]
	\Input{A PTA $\A$, locations~$\locpriv,\locfinal$}
	\Output{Constraint $\K$ over the parameters}

	\LongVersion{\BlankLine}
	
	$\A' \assign \Enrich(\A, \locpriv, \locfinal)$
	
	$\A'' \assign \A' \parallel_{\{\actionEnd\}} \Copy(\A')$

	\Return $\EFsynth \Big(\A'', \big\{ ( \locfinal \land \bflag = \BTrue , \locfinal' \land \bflag' = \BFalse ) \big\} \Big)$
	
	\caption{$\SynthOp(\A, \locpriv, \locfinal)$}
	\label{algo:SynthOp}
\end{algorithm}

\ea{IMPORTANT NOTE: all the results are WRONG because we changed the mode (before $[sleep - \epsilon ; sleep + \epsilon]$ ; now $[sleep ; sleep + \epsilon]$}

\begin{example}
	Consider again the PTA~$\A$ in \cref{figure:example-Java:PTA}\LongVersion{: its enriched version~$\A'$ is given in \cref{figure:example-Java:PTA-transformed}}.
	Fix $\pval(\styleparam{\epsilon}) = 1$, $\pval(\styleparam{p}) = 2$.
	We then perform the synthesis applied to the self-composition of~$\A'$ according to \cref{algo:SynthOp}.
	The result obtained with \imitator{} is:
	$\paramabs = \emptyset$ (as expected from \cref{example:Java-PTA}).

	Now fix $\pval(\styleparam{\epsilon}) = 2$, $\pval(\styleparam{p}) = 1.002$.
	We obtain:
	$\paramabs \in [1026.048, 1034]$ (again, as expected from \cref{example:Java-PTA}).
	
	Now let us keep all parameters unconstrained.
	The result of \cref{algo:SynthOp} is the following 3-dimensional constraint:
	\ShortVersion{ $5 \times \styleparam{\epsilon} + 1024  \geq \styleparam{\paramabs} \geq 1024  \land 1024 \times \styleparam{p} + 5 \times \styleparam{\epsilon} \geq \styleparam{\paramabs} \geq 1024 \times \styleparam{p}  \geq 0$.
}

\LongVersion{
\begin{tabular}{l l}
 & $5 \times \styleparam{\epsilon} + 1024  \geq \styleparam{\paramabs} \geq 1024$
 \\
$ \land$ & $1024 \times \styleparam{p} + 5 \times \styleparam{\epsilon} \geq \styleparam{\paramabs} \geq 1024 \times \styleparam{p}  \geq 0$
 \end{tabular}
 }
\end{example}
\subsubsection{Soundness}

We will state below that, whenever $\SynthOp(\A, \locpriv, \locfinal)$ terminates, then its result is an exact (sound and complete) answer to the timed-opacity synthesis problem.

\newcommand{\proofsSoundness}{

Let us first prove a technical lemma used later to prove our the soundness of \SynthOp{}.

\begin{lemma}\label{lemma:ET}
	Assume $\SynthOp(\A, \locpriv, \locfinal)$ terminates with result~$\K$.
	For all $\pval \models \K$, there exists a run ending in~$\locfinal$ at time~$\pval(\paramabs)$ in~$\valuate{\A}{\pval}$.
\end{lemma}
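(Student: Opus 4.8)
The plan is to unfold the definition of $\SynthOp$ and then chain together the correctness of \EFsynth{} (\cref{prop:EFsynth}), a projection argument on the synchronous product, and a routine analysis of the $\Enrich$ transformation. Recall that $\SynthOp(\A, \locpriv, \locfinal)$ returns $\K = \EFsynth\big(\A'', \{ ( \locfinal \land \bflag = \BTrue, \locfinal' \land \bflag' = \BFalse ) \}\big)$, where $\A'' = \A' \parallel_{\{\actionEnd\}} \Copy(\A')$ and $\A' = \Enrich(\A, \locpriv, \locfinal)$. Since by assumption $\SynthOp$ terminates, so does \EFsynth{}; hence by \cref{prop:EFsynth}, for every $\pval \models \K$ the location $(\locfinal \land \bflag = \BTrue, \locfinal' \land \bflag' = \BFalse)$ is reachable in $\valuate{\A''}{\pval}$, i.e.\ there is a run $\varrun''$ of $\valuate{\A''}{\pval}$ ending in that location.

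Next I would project $\varrun''$ onto its first component. By \cref{definition:parallel}, each transition of $\A''$ is either a synchronization on $\actionEnd$ (which moves both copies simultaneously) or an interleaved move of a single copy; keeping only the first-component location of each state, and discarding those steps in which only $\Copy(\A')$ moves, yields a run $\varrun'$ of $\valuate{\A'}{\pval}$ ending in $\locfinal$ with $\bflag = \BTrue$. Since the two copies share no variables (except $\clockabs$ and $\paramabs$, which are never reset and therefore carry identical values in both copies), $\varrun'$ is a legal run of $\valuate{\A'}{\pval}$, and it has the same total elapsed time as $\varrun''$ because the delays are common to both copies.

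Then I would read off the duration from the $\Enrich$ construction. Every edge into $\locfinal$ in $\A'$ is guarded by $\clockabs = \paramabs$, and $\clockabs$ is never reset, so along $\varrun'$ the value of $\clockabs$ always equals the elapsed time, and upon entering $\locfinal$ it equals $\pval(\paramabs)$; hence $\duration(\varrun') = \pval(\paramabs)$. Finally, $\Enrich$ only adds the fresh clock $\clockabs$, the fresh parameter $\paramabs$, the Boolean flag $\bflag$ (syntactic sugar for location copies), and the label $\actionEnd$, none of which constrains the original clocks or parameters apart from the single guard on $\clockabs$; erasing these auxiliary components from $\varrun'$ therefore gives a run of $\valuate{\A}{\pval}$ ending in $\locfinal$ with the same sequence of delays, hence of duration $\pval(\paramabs)$ — which is exactly the run required by the statement.

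The bulk of the argument is bookkeeping; the step needing the most care is the projection from $\varrun''$ to $\varrun'$, where I must verify that discarding the copy's interleaved moves genuinely produces a legal run of $\valuate{\A'}{\pval}$ (the guards and invariants of $\A'$ being unaffected by the copy) and that the delay durations are preserved exactly, so that the final duration is precisely $\pval(\paramabs)$ rather than merely bounded by it.
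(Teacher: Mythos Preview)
Your proposal is correct and rests on the same key observation as the paper: the guard $\clockabs = \paramabs$ added by $\Enrich$ on every edge into $\locfinal$, together with $\clockabs$ never being reset, forces any arrival in $\locfinal$ to occur at absolute time $\pval(\paramabs)$. The paper's proof is essentially a two-sentence remark to that effect and leaves the surrounding steps implicit; you spell out explicitly the chain from $\pval \models \K$ via \cref{prop:EFsynth} to reachability in $\valuate{\A''}{\pval}$, the projection onto the first component of the product, and the erasure of the auxiliary data to recover a run of $\valuate{\A}{\pval}$. So the approach is the same, but your write-up is considerably more rigorous than the paper's.
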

\begin{proof}
	From the construction of~$\Enrich$, we added a new clock $\clockabs$ (never reset) together with a new parameter $\paramabs$, and we guarded all transitions to~$\locfinal$ with $\clockabs = \paramabs$.
	Therefore, valuations of~$\paramabs$ correspond exactly to the times at which $\locfinal$ can be reached in~$\valuate{\A}{\pval}$.
\end{proof}

We can now prove soundness and completeness.

\begin{proposition}[soundness]\label{proposition:soundness}
	Assume $\SynthOp(\A, \locpriv, \locfinal)$ terminates with result~$\K$.
	For all $\pval \models \K$, there exists a run of duration $\pval(\paramabs)$ such that $\locpriv$ is reachable on the way to~$\locfinal$ in~$\valuate{\A}{\pval}$
		and
	there exists a run of duration $\pval(\paramabs)$ such that $\locpriv$ is unreachable on the way to~$\locfinal$ in~$\valuate{\A}{\pval}$.
\end{proposition}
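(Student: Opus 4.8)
The plan is to peel off the three layers of $\SynthOp$ — the enrichment, the self-composition, and the reachability synthesis — and to reconstruct the two witnessing runs of $\valuate{\A}{\pval}$ from a single run of the self-composed automaton. First I would fix $\pval \models \K$ and recall that, by \cref{algo:SynthOp}, $\K = \EFsynth(\A'', T)$ with $\A' = \Enrich(\A,\locpriv,\locfinal)$, $\A'' = \A' \parallel_{\{\actionEnd\}} \Copy(\A')$ and $T = \{(\locfinal \land \bflag = \BTrue,\ \locfinal' \land \bflag' = \BFalse)\}$. Since $\SynthOp(\A,\locpriv,\locfinal)$ terminates, so does this call to $\EFsynth$; hence by its soundness and completeness (\cref{prop:EFsynth}) the location of $T$ is reachable in $\valuate{\A''}{\pval}$, and I pick a run $\varrun''$ of $\valuate{\A''}{\pval}$ witnessing this.

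Next I would decompose $\varrun''$ into its two component runs. By \cref{definition:parallel} the only action on which the two copies synchronize is $\actionEnd$, and by construction of $\Enrich$ the action $\actionEnd$ labels exactly the edges entering $\locfinal$ (resp.\ $\locfinal'$); every other transition of $\varrun''$ is an interleaved move of a single copy. Projecting $\varrun''$ onto the first copy therefore gives a run $\varrun_1$ of $\valuate{\A'}{\pval}$ ending in $\locfinal$ with $\bflag = \BTrue$, projecting onto the second copy gives a run $\varrun_2$ of $\valuate{\Copy(\A')}{\pval}$ ending in $\locfinal'$ with $\bflag' = \BFalse$, and $\duration(\varrun_1) = \duration(\varrun_2)$, because the unique $\actionEnd$-step couples the two copies and forces them to enter their final location at the same instant. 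Applying \cref{lemma:ET} to $\A'$ — whose edges into $\locfinal$ are guarded by $\clockabs = \paramabs$ with $\clockabs$ never reset, a property inherited verbatim (up to renaming) by $\Copy(\A')$ — this common duration is exactly $\pval(\paramabs)$.

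Then I would push $\varrun_1$ and $\varrun_2$ back to $\valuate{\A}{\pval}$: $\Enrich$ only adds the fresh Boolean $\bflag$, the action $\actionEnd$, the clock $\clockabs$ and the parameter $\paramabs$, and none of these restrict the admissible delays or the reachable original locations (the single constraint $\clockabs = \paramabs$ on the edges to $\locfinal$ merely pins the total duration, already accounted for). So forgetting these auxiliary ingredients turns $\varrun_1$ (resp.\ $\varrun_2$) into a run of $\valuate{\A}{\pval}$ over the same sequence of original locations and of duration $\pval(\paramabs)$. Since $\bflag$ is set to $\BTrue$ precisely on the edges entering $\locpriv$, having $\bflag = \BTrue$ at the end of $\varrun_1$ means $\locpriv$ is visited before $\locfinal$, \ie{} $\locpriv$ is reachable on the way to $\locfinal$ in that run; dually, $\bflag' = \BFalse$ at the end of $\varrun_2$ means $\locpriv$ is never visited, \ie{} $\locpriv$ is unreachable on the way to $\locfinal$. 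These are the two runs of duration $\pval(\paramabs)$ claimed by the statement.

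The step I expect to be the main obstacle is the decomposition of $\varrun''$: one has to argue carefully that a run of the synchronized product really does project onto two honest component runs of equal duration, which rests on $\actionEnd$ being the unique synchronizing action and occurring only on the terminal edges, so that all other moves interleave freely while both copies are nonetheless forced to terminate simultaneously. A minor point to watch is the degenerate case in which $\locfinal$ has outgoing edges and could be revisited; it is handled by restricting to the prefix of $\varrun''$ ending at its first $\actionEnd$-step (or simply by assuming $\locfinal$ has no outgoing edges, as the construction allows).
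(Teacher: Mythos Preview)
Your proposal is correct and follows essentially the same approach as the paper: invoke \cref{prop:EFsynth} to obtain a run of $\valuate{\A''}{\pval}$ reaching the target pair, use \cref{lemma:ET} to pin its duration to $\pval(\paramabs)$, and then exploit that the self-composition is pure interleaving except for the terminal $\actionEnd$-synchronization to extract the two witnessing runs of $\valuate{\A}{\pval}$. Your write-up is in fact more explicit than the paper's on the projection step and on pushing the runs from $\A'$ back to $\A$, which the paper dispatches in a single sentence.
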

\begin{proof}
	$\SynthOp(\A, \locpriv, \locfinal)$ is the result of \EFsynth{} called on the self-composition of~$\Enrich(\A, \locpriv, \locfinal)$.
	Recall that $\Enrich$ has enriched~$\A$ with the addition of a guard $\clockabs = \paramabs$ on the incoming transitions of~$\locfinal$, as well as a Boolean flag $\bflag$ that is $\BTrue$ iff $\locpriv$ was visited along a run.
	Assume $\pval \models \K$.
	From \cref{prop:EFsynth}, there exists a run of $\A''$ reaching $ \locfinal \land \bflag = \BTrue , \locfinal' \land \bflag' = \BFalse$.
	From \cref{lemma:ET}, this run
	takes $\pval(\paramabs)$ time units.
	From the self-composition that is made of interleaving only (except for the final synchronization), there exists a run of duration $\pval(\paramabs)$ such that $\locpriv$ is reachable on the way to~$\locfinal$ in~$\valuate{\A}{\pval}$
	and
	there exists a run of duration $\pval(\paramabs)$ such that $\locpriv$ is unreachable on the way to~$\locfinal$ in~$\valuate{\A}{\pval}$.
\end{proof}
\begin{proposition}[completeness]\label{proposition:completeness}
	Assume $\SynthOp(\A, \locpriv, \locfinal)$ terminates with result~$\K$.
	Assume $\pval$.
	Assume there exists a run of duration $\pval(\paramabs)$ such that $\locpriv$ is reachable on the way to~$\locfinal$ in~$\valuate{\A}{\pval}$
		and
	there exists a run of duration $\pval(\paramabs)$ such that $\locpriv$ is unreachable on the way to~$\locfinal$ in~$\valuate{\A}{\pval}$.
	Then $\pval \models \K$.
\end{proposition}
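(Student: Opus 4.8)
The plan is to establish completeness as the mirror image of the soundness argument (\cref{proposition:soundness}), the endpoint being the correctness of \EFsynth{} (\cref{prop:EFsynth}). Since $\SynthOp(\A, \locpriv, \locfinal)$ has terminated, $\K$ is the result of $\EFsynth$ applied to $\A'' = \A' \parallel_{\{\actionEnd\}} \Copy(\A')$, with $\A' = \Enrich(\A, \locpriv, \locfinal)$, and target location $(\locfinal \land \bflag = \BTrue, \locfinal' \land \bflag' = \BFalse)$. Hence it suffices to prove that, under the hypotheses on $\pval$, this target location is reachable in $\valuate{\A''}{\pval}$; then \cref{prop:EFsynth} gives $\pval \models \K$. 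So I fix the two runs granted by the hypothesis: a run $\varrun_1$ of $\valuate{\A}{\pval}$ along which $\locpriv$ is reachable on the way to $\locfinal$, and a run $\varrun_2$ of $\valuate{\A}{\pval}$ along which $\locpriv$ is unreachable on the way to $\locfinal$, both reaching $\locfinal$ at time $\pval(\paramabs)$ (without loss of generality I take the prefixes that first reach $\locfinal$; by \cref{lemma:ET} this is the relevant notion of time).

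First I would lift $\varrun_1$ and $\varrun_2$ through $\Enrich$. None of the three additions made by $\Enrich$ restricts behaviour here: the flag $\bflag$ is updated deterministically and merely records whether $\locpriv$ has been visited; the action $\actionEnd$ only relabels the edges entering $\locfinal$; and the clock $\clockabs$ is never reset, so along any run its value coincides with the elapsed time. Consequently $\varrun_1$ lifts to a run of $\valuate{\A'}{\pval}$ that, at the step entering $\locfinal$, has $\clockabs$ equal to the elapsed time $\pval(\paramabs)$ — hence satisfies the new guard $\clockabs = \paramabs$ under $\pval$ — and has $\bflag = \BTrue$ because $\locpriv$ was visited; symmetrically $\varrun_2$ lifts to a run of $\valuate{\Copy(\A')}{\pval}$ reaching $\locfinal'$ with $\bflag' = \BFalse$. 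Both lifted runs have duration $\pval(\paramabs)$ and perform $\actionEnd$ exactly once, as their final discrete step.

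Next I would merge these two runs into a single run of $\valuate{\A''}{\pval}$. By \cref{definition:parallel}, $\actionEnd$ is the only action the two copies must synchronize on, while all other actions are interleaved and delays are taken jointly. Since both lifted runs have total duration $\pval(\paramabs)$ and take their unique $\actionEnd$-step exactly at global time $\pval(\paramabs)$, I can interleave their discrete transitions in any order, splitting delays so that the accumulated time of the merged run is consistent with the clock valuations of each component, and then let both copies take $\actionEnd$ jointly at time $\pval(\paramabs)$. The resulting run of $\valuate{\A''}{\pval}$ reaches $(\locfinal \land \bflag = \BTrue, \locfinal' \land \bflag' = \BFalse)$, so by \cref{prop:EFsynth} we conclude $\pval \models \K$.

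The step I expect to be the main obstacle is the formal justification of this merging: showing that two timed runs of equal duration of the two component TAs can always be realized as a single interleaved run of their synchronized product that synchronizes precisely on the final $\actionEnd$. This hinges on $\actionEnd$ occurring once and only as the last step of each run — which is exactly why restricting to the prefixes that first reach $\locfinal$ (and, as the author notes, forbidding outgoing edges from $\locfinal$) makes the argument clean; once that is in place, the delay-splitting construction of the interleaving is routine but should be spelled out carefully.
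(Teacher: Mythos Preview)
Your proposal is correct and follows essentially the same route as the paper: lift the two runs through $\Enrich$ to runs of $\valuate{\A'}{\pval}$ and $\valuate{\Copy(\A')}{\pval}$ reaching $\locfinal$ with $\bflag=\BTrue$ (resp.\ $\locfinal'$ with $\bflag'=\BFalse$), then interleave them in the self-composition, synchronize on $\actionEnd$ at time $\pval(\paramabs)$, and conclude via \cref{prop:EFsynth}. If anything, you are more explicit than the paper about the delicate point (the delay-splitting needed to realize the interleaving and the fact that $\actionEnd$ occurs exactly once, as the final step), which the paper dispatches in a single sentence.
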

\begin{proof}
	Assume $\SynthOp(\A, \locpriv, \locfinal)$ terminates with result~$\K$.
	Assume $\pval$.
	Assume there exists a run~$\varrun$ of duration $\pval(\paramabs)$ such that $\locpriv$ is reachable on the way to~$\locfinal$ in~$\valuate{\A}{\pval}$
		and
	there exists a run~$\varrun'$ of duration $\pval(\paramabs)$ such that $\locpriv$ is unreachable on the way to~$\locfinal$ in~$\valuate{\A}{\pval}$.
	
	First, from~$\Enrich$, there exists a run~$\varrun$ of duration $\pval(\paramabs)$ such that $\locpriv$ is reachable (resp.\ unreachable) on the way to~$\locfinal$ in~$\valuate{\A}{\pval}$ implies that there exists a run~$\varrun$ of duration $\pval(\paramabs)$ such that $\locfinal \land \bflag = \BTrue$ (resp.\ $\bflag = \BFalse$) is reachable in~$\valuate{\Enrich(\A)}{\pval}$.
	
	Since our self-composition allows any interleaving, runs~$\varrun$ of $\valuate{\A'}{\pval}$ and~$\varrun'$ in~$\valuate{\Copy(\A')}{\pval}$ are independent---except for reaching~$\locfinal$.
	Since $\varrun$ and $\varrun'$ have the same duration $\pval(\paramabs)$, then they both reach $\locfinal$ at the same time and, from our definition of self-composition, they can simultaneously fire action~$\actionEnd$ and enter~$\locfinal$ at time~$\pval(\paramabs)$.
	Hence, there exists a run reaching $\locfinal \land \bflag = \BTrue , \locfinal' \land \bflag' = \BFalse$ in~$\valuate{\A''}{\pval}$.
	
	Finally, from \cref{prop:EFsynth}, $\pval \models \K$.
\end{proof}

}

\LongVersion{
	\proofsSoundness{}
}
\begin{theorem}[correctness]\label{theorem:correctness}
	Assume $\SynthOp(\A, \locpriv, \locfinal)$ terminates with result~$\K$.
	Assume $\pval$.
	The following two statements are equivalent:
	\begin{enumerate}
		\item
			There exists a run of duration $\pval(\paramabs)$ such that $\locpriv$ is reachable on the way to~$\locfinal$ in~$\valuate{\A}{\pval}$
				and
			there exists a run of duration $\pval(\paramabs)$ such that $\locpriv$ is unreachable on the way to~$\locfinal$ in~$\valuate{\A}{\pval}$.
		\item $\pval \models \K$.
	\end{enumerate}
\end{theorem}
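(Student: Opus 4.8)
The plan is to obtain this equivalence by simply combining the two one-directional results already established in the extended version: direction $(2)\Rightarrow(1)$ is precisely \cref{proposition:soundness}, and direction $(1)\Rightarrow(2)$ is precisely \cref{proposition:completeness}, both under the standing hypothesis that $\SynthOp(\A, \locpriv, \locfinal)$ terminates with result~$\K$. So the body of the proof reduces to a one-line appeal to these two propositions, which themselves rest on the auxiliary \cref{lemma:ET}. Since each direction is short, I recall the arguments below.

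For $(2)\Rightarrow(1)$, assume $\pval \models \K$. By construction $\K$ is the output of $\EFsynth{}$ applied to the self-composition $\A'' = \A' \parallel_{\{\actionEnd\}} \Copy(\A')$ with $\A' = \Enrich(\A, \locpriv, \locfinal)$; hence, by the soundness of $\EFsynth{}$ upon termination (\cref{prop:EFsynth}), there is a run of $\valuate{\A''}{\pval}$ reaching the product location $(\locfinal \land \bflag = \BTrue,\; \locfinal' \land \bflag' = \BFalse)$. By \cref{lemma:ET}---the fresh clock $\clockabs$ is never reset and every edge into $\locfinal$ carries the guard $\clockabs = \paramabs$---that run has duration $\pval(\paramabs)$. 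Now project it onto the two components of the synchronous product: this is legitimate since synchronization occurs only on $\actionEnd$ and all other moves are pure interleaving, delays being shared. One thus gets a run of $\valuate{\A'}{\pval}$ reaching $\locfinal$ with $\bflag = \BTrue$ and a run of $\valuate{\Copy(\A')}{\pval}$ reaching $\locfinal$ with $\bflag' = \BFalse$, each of duration $\pval(\paramabs)$. Stripping from each run the bookkeeping added by $\Enrich{}$---whose only effects are to record (via $\bflag$) whether $\locpriv$ was visited and to pin the arrival time at $\locfinal$ to $\pval(\paramabs)$---yields a run of $\valuate{\A}{\pval}$ of duration $\pval(\paramabs)$ reaching $\locfinal$ through $\locpriv$, and a second one of duration $\pval(\paramabs)$ reaching $\locfinal$ without $\locpriv$, as required.

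For the converse $(1)\Rightarrow(2)$, suppose there are runs $\varrun, \varrun'$ of $\valuate{\A}{\pval}$ of duration $\pval(\paramabs)$, the first reaching $\locfinal$ after passing through $\locpriv$ and the second reaching $\locfinal$ without it. Lift them to $\valuate{\A'}{\pval}$: $\varrun$ becomes a run reaching $\locfinal$ with $\bflag = \BTrue$ and $\varrun'$ a run reaching $\locfinal$ with $\bflag = \BFalse$, the guard $\clockabs = \paramabs$ being satisfied precisely because both durations equal $\pval(\paramabs)$. Then interleave $\varrun$ (in $\A'$) with a renamed copy of $\varrun'$ (in $\Copy(\A')$) into a single run of $\valuate{\A''}{\pval}$: the two copies use disjoint variables and so move independently, and since both reach $\locfinal$ at the same elapsed time $\pval(\paramabs)$ they can jointly fire $\actionEnd$. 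This run reaches $(\locfinal \land \bflag = \BTrue,\; \locfinal' \land \bflag' = \BFalse)$ in $\valuate{\A''}{\pval}$, so $\pval \models \K$ by the completeness of $\EFsynth{}$ (\cref{prop:EFsynth}).

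The one point needing genuine care---hence the main obstacle---is the interleaving/decomposition argument used in both directions: making precise that two independent runs of equal duration can be shuffled into a single run of the self-composition that synchronizes only on the terminal $\actionEnd$, and conversely that a run of the product splits into two such runs. This rests on (i) $\clockabs$ never being reset in either copy, so that ``equal duration'' genuinely amounts to ``reaching $\locfinal$ at the same instant'', and (ii) strong-broadcast synchronization being restricted to $\{\actionEnd\}$, so that every move of one copy is a stutter step for the other. Everything else---the bookkeeping performed by $\Enrich{}$ and the two invocations of \cref{prop:EFsynth}---is routine.
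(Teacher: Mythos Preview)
Your proposal is correct and follows exactly the paper's approach: the paper's proof of \cref{theorem:correctness} is the one-liner ``From \cref{proposition:soundness,proposition:completeness}'', and the detailed arguments you spell out for each direction faithfully reproduce the proofs of those two propositions (which in turn rely on \cref{lemma:ET} and \cref{prop:EFsynth}, as you note). Your closing remark about the interleaving/decomposition being the only non-routine step is also apt and matches where the substance lies in the paper's proofs of the two propositions.
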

\LongVersion{
\begin{proof}
	From \cref{proposition:soundness,proposition:completeness}
\end{proof}
}

\todoinline{
\begin{example}
	Get the largest possible set of execution times?
	For \cref{figure:example-Java:PTA}, The answer shall be $\styleparam{p_1} = \styleparam{p_2} \land \styleparam{p} = 1$
\end{example}
}

\section{Experiments}\label{section:experiments}

\LongVersion{
\subsection{Experimental environment}
}

We use \imitator{}~\cite{AFKS12}, a tool taking as input networks of PTAs extended with several handful features such as shared global discrete variables, PTA synchronization through strong broadcast, etc.
\LongVersion{%
	\imitator{} represents symbolic states as polyhedra, relying on PPL~\cite{BHZ08}.

}%
We ran experiments using \imitator{} 2.10.4 ``Butter Jellyfish''\LongVersion{ (build 2477 \texttt{HEAD/5b53333})}
on a Dell XPS 13 9360 equipped with an Intel\textregistered{} Core\texttrademark{} i7-7500U CPU @ 2.70GHz with 8\,GiB memory running Linux Mint 18.3 64\,bits.\footnote{%
	Sources, models and results are available at
	\href{https://doi.org/10.5281/zenodo.3251141}{\nolinkurl{doi.org/10.5281/zenodo.3251141}}%
	\LongVersion{and
		\href{https://www.imitator.fr/static/ATVA19/}{\nolinkurl{imitator.fr/static/ATVA19/}}}.
}

\subsection{Translating programs into PTAs}\label{ss:Java2PTA}

We will consider case studies from the PTA community and from previous works focusing on privacy using (parametric) timed automata.
In addition, we will be interested in analyzing programs too.
In order to apply our method to the analysis of programs, we need a systematic way of translating a program (\eg{} a Java program) into a PTA.
In general, precisely modeling the execution time of a program using models like timed automata is highly non-trivial due to complication of hardware pipelining, caching, OS scheduling, etc.
The readers are referred to the rich literature in, for instance, \cite{LYGY10}.
In this work, we instead make the following simplistic assumption on execution time of a program statement and focus on solving the parameter synthesis problem.
How to precisely model the execution time of programs is orthogonal and complementary to our work.

We assume that the execution time of a program statement other than \stylecode{Thread.sleep(n)} is within a range $[0,\epsilon]$ where $\epsilon$ is a small integer constant (in milliseconds), whereas the execution time of statement \stylecode{Thread.sleep(n)} is within a range $[n , n+\epsilon]$.
In fact, we choose to keep $\epsilon$ \emph{parametric} to be as general as possible, and to not depend on particular architectures.

Our test subject is a set of benchmark programs from the DARPA Space/Time Analysis for Cybersecurity (STAC) program.\footnote{\url{https://github.com/Apogee-Research/STAC/}}
	These programs are being released publicly to facilitate researchers to develop methods and tools for identifying STAC vulnerabilities in the programs.

\todo{explain we merge ``wait'' loops into a static number of it (as you did). We also merge subfunction calls}

\ea{perhaps there is some literature on Java to TAs?}

\todo{check \cite[Fig3.a and b]{VNN18}; although not sure it'll terminate!! Also Fig.5 (\cref{figure:example:VNN18}) is perfect for us! (and will terminate)}

\todo{\cite[Fig.12]{BCLR15} is a simple yet interesting example for our framework}

\subsection{A richer framework}

The symbolic representation of variables and parameters in \imitator{} allows us to reason \emph{symbolically} concerning variables.
That is, instead of enumerating all possible (bounded) values of \styledisc{x} and \styledisc{secret} in \cref{figure:example-Java:PTA}, we turn them to parameters (\ie{} unknown constants), and \imitator{} performs a symbolic reasoning.
Even better, the analysis terminates for this example even when no bound is provided on these variables.
This is often not possible in (non-parametric) timed automata based model checkers, that usually have to enumerate these values.
Therefore, in our PTA representation of Java programs, we turn all user-input variable and secret constant variables to parameters.
Other local variables are implemented using \imitator{} discrete (shared, global) variables.

\LongVersion{We also discuss \ShortVersion{in \cref{appendix:richer} }how to enlarge the scope of our framework.}

\newcommand{\richerFramework}{
\paragraph{Multiple private locations}
This can be easily achieved by setting $\bflag$ to $\BTrue$ along any incoming transition of one of these private locations.
\ea{useful?}

\paragraph{Multiple final locations}
The technique used depends on whether these multiple final locations can be distinguished or not.
If they are indistinguishable (\ie{} the observer knows when the program has terminated, but not in which state), then it suffices to merge all these final locations in a single one, and our framework trivially applies.
If they are distinguishable, then one analysis needs to be conducted on each of these locations (with a different parameter~$\paramabs$ for each of these), and the obtained constraints must be intersected.\ea{useful?}

\paragraph{Access to high-level variables}
In the literature, a distinction is sometimes made between low-level (``public'') and high-level (``private'') variables.
Opacity or non-interference can be defined in terms of the ability for an observer to deduce some information on the high-level variables.

\begin{example}
	For example, in \cref{figure:example:VNN18} (where $\styleclock{cl}$ is a clock and $\styledisc{h}$ a variable), if $\loc_2$ is reachable in 20 time units, then it is clear that the value of the high-level variable $\styledisc{h}$ is negative.
\end{example}

Our framework can also be used to address this problem, \eg{} by setting $\bflag$ to~$\BTrue$, not on locations but on selected tests / assignments of such variables.

\begin{example}
	For example, setting $\bflag$ to~$\BTrue$ on the upper transition from~$\loc_1$ to~$\loc_2$ in \cref{figure:example:VNN18}, the answer to the timed-opacity computation problem is $\Times = (30, \infty)$, and the system is therefore not opaque since $\loc_2$ can be reached for any execution time in~$[0, \infty)$.
	\todo{new figure too}
\end{example}
\begin{figure}[tb]
	\centering
	\footnotesize
	\begin{tikzpicture}[scale=1, auto, ->, >=stealth']

		\node[location, initial] at (0,0) (l1) {$\loc_1$};

		\node[location, final] at (2, 0) (l2) {$\loc_2$};

		\path (l1) edge[bend left] node[align=center]{$\styledisc{h} > 0$ \\ $\styleclock{cl} > 30$} (l2);

		\path (l1) edge[bend right] node[below,align=center]{$\styledisc{h} \leq 0$} (l2);

	\end{tikzpicture}
	\caption{\cite[Fig.~5]{VNN18}}
	\label{figure:example:VNN18}

\end{figure}

}

\LongVersion{\richerFramework{}}

\ea{

measuring some time in some locations? quite easy probably
	Remark: measuring time in some locations starts to resemble language equivalence under partial observation!

But as this is quite exceeding our framework, I'd suggest to leave it for next time.
What do you think?
}

\subsection{Experiments}
\subsubsection{Benchmarks}

As a proof of concept, we applied our method to a set of examples from the literature.
The first five models come from previous works from the literature~\cite{GMR07,BCLR15,VNN18}, also addressing non-interference or opacity in timed automata.\LongVersion{\footnote{%
	As most previous works on opacity and timed automata do not come with an implementation nor with benchmarks, it is not easy to find larger models coming in the form of TAs.
}}
In addition, we used two common models from the (P)TA literature, not necessarily linked to security:
	a toy coffee machine (\stylebench{Coffee}) used as benchmark in a number of papers,
	and
	a model Fischer's mutual exclusion protocol (\stylebench{Fischer-HRSV02}) \cite{HRSV02}.
In both cases, we added manually a definition of private location (the number of sugars ordered, and the identity of the process entering the critical section, respectively), and we verified whether they are opaque \wrt{} these internal behaviors.

We also applied our approach to a set of Java programs from the aforementioned STAC library.
We use identifiers of the form \stylebench{STAC:1:n} where \stylebench{1} denotes the identifier in the library, while \stylebench{n} (resp.~\stylebench{v}) denotes non-vulnerable (resp.\ vulnerable).
We manually translated these programs to parametric timed automata, following the method described in \cref{ss:Java2PTA}.
We used a representative set of programs from the library; however, some of them were too complex to fit in our framework, notably when the timing leaks come from calls to external libraries (\stylebench{STAC:15:v}), when dealing with complex computations such as operations on matrices (\stylebench{STAC:16:v}) or when handling probabilities (\stylebench{STAC:18:v}).
Proposing efficient and accurate ways to represent arbitrary programs into (parametric) timed automata is orthogonal to our work, and is the object of future works.

\subsubsection{Timed-opacity computation}

First, we \emph{verified} whether a given TA model is opaque, \ie{} if for all execution times reaching a given final location, both an execution passes by a given private location and an execution does not pass by this private location.
To this end, we also answer the timed-opacity computation problem, \ie{} to synthesize all execution times for which the system is opaque.
While this problem can be verified on the region graph (\cref{proposition:ET-opacity-computation}), we use the same framework as in \cref{section:synthesis}, but without parameters in the original TA.
That is, we use the Boolean flag~$\bflag$ and the parameter $\paramabs$ to compute all possible execution times.
In other words, we use a parametric analysis to solve a non-parametric problem.

\newcommand{\columnRef}[1]{}

\begin{table}[tb]
	\caption{Experiments: timed opacity}
	\centering
	\footnotesize
	\ShortVersion{\scriptsize}
	
	\setlength{\tabcolsep}{2pt} %

	\LongVersionTable{\begin{tabular}{| c | c | c | c | c | c | r | r | c |}} %
	\ShortVersionTable{\begin{tabular}{| c | c | c | c | c | r | r | c |}} %
		\hline
		\LongVersionTable{\multicolumn{3}{| c |}{\cellHeader{}Model} & \multicolumn{3}{ c |}{\cellHeader{}Transf. PTA} & \multicolumn{3}{c |}{\cellHeader{}Result}}
		\ShortVersionTable{\multicolumn{3}{| c |}{\cellHeader{}Model} & \multicolumn{3}{ c |}{\cellHeader{}Transf. PTA} & \multicolumn{2}{c |}{\cellHeader{}Result}}
		\\
		\hline
		\rowHeader{}
		Name & \columnRef{Reference & }$|\A|$ & $|\Clock|$ & $|\A|$ & $|\Clock|$ & $|\Param|$\LongVersionTable{ & States} &Time (s) & Vulnerable? \\
		\hline
		\cite[Fig.~5]{VNN18} & \columnRef{\cite{VNN18} &} 1 & 1 & 2 & 3 & 3 \LongVersionTable{ & 13} & 0.02 & \cellFixable{}  \\ %
		\hline
		\cite[Fig.~1b]{GMR07} & \columnRef{\cite{GMR07} & }1 & 1 & 2 & 3 & 1 \LongVersionTable{& 25} & 0.04 & \cellFixable{} \\ %
		\hline
		\cite[Fig.~2a]{GMR07} & \columnRef{\cite{GMR07} & }1 & 1 & 2 & 3 & 1 \LongVersionTable{& 41} & 0.05 & \cellFixable{} \\ %
		\hline
		\cite[Fig.~2b]{GMR07} & \columnRef{\cite{GMR07} & }1 & 1 & 2 & 3 & 1 \LongVersionTable{& 41} & 0.02 & \cellFixable{} \\ %
		\hline
		Web privacy problem \cite{BCLR15} & \columnRef{\cite{BCLR15} & }1 & 2 & 2 & 4 & 1 \LongVersionTable{& 105} & 0.07 & \cellFixable{} \\ %
		\hline
		\stylebench{Coffee} & \columnRef{\cite{Andre18FTSCS} & }1 & 2 & 2 & 5 & 1 \LongVersionTable{& 43} & 0.05 & \cellNo{} \\ %
		\hline
		\stylebench{Fischer-HSRV02} & \columnRef{\cite{HRSV02} & }3 & 2 & 6 & 5 & 1 \LongVersionTable{& 2495} & 5.83 & \cellFixable{} \\ %
		\hline
		\stylebench{STAC:1:n} & \columnRef{ & }\multicolumn{2}{c |}{\nbLoC{69}} & 2 & 3 & 6 \LongVersionTable{& 65} & 0.12 & \cellFixable{} \\ %
		\hline
		\stylebench{STAC:1:v} & \columnRef{ & }\multicolumn{2}{c |}{\nbLoC{69}} & 2 & 3 & 6 \LongVersionTable{& 63} & 0.11 & \cellYes{} \\ %
		\hline
		\stylebench{STAC:3:n} & \columnRef{ & }\multicolumn{2}{c |}{\nbLoC{87}} & 2 & 3 & 8 \LongVersionTable{& 289} & 0.72 & \cellNo{} \\ %
		\hline
		\stylebench{STAC:3:v} & \columnRef{ & }\multicolumn{2}{c |}{\nbLoC{87}} & 2 & 3 & 8 \LongVersionTable{& 287} & 0.74 & \cellFixable{} \\ %
		\hline
		\stylebench{STAC:4:n} & \columnRef{ & }\multicolumn{2}{c |}{\nbLoC{112}} & 2 & 3 & 8 \LongVersionTable{& 904} & 6.40 & \cellYes{} \\ %
		\hline
		\stylebench{STAC:4:v} & \columnRef{ & }\multicolumn{2}{c |}{\nbLoC{110}} & 2 & 3 & 8 \LongVersionTable{& 19183} & 265.52 & \cellYes{} \\ %
		\hline
		\stylebench{STAC:5:n} & \columnRef{ & }\multicolumn{2}{c |}{\nbLoC{115}} & 2 & 3 & 6 \LongVersionTable{& 144} & 0.24 & \cellNo{} \\ %
		\hline
		\stylebench{STAC:11A:v} & \columnRef{ & }\multicolumn{2}{c |}{\nbLoC{81}} & 2 & 3 & 8 \LongVersionTable{& 5037} & 47.77 & \cellFixable{} \\ %
		\hline
		\stylebench{STAC:11B:v} & \columnRef{ & }\multicolumn{2}{c |}{\nbLoC{85}} & 2 & 3 & 8 \LongVersionTable{& 5486} & 59.35 & \cellFixable{} \\ %
		\hline
		\stylebench{STAC:12c:v} & \columnRef{ & }\multicolumn{2}{c |}{\nbLoC{81}} & 2 & 3 & 8 \LongVersionTable{& 1177} & 18.44 & \cellYes{} \\ %
		\hline
		\stylebench{STAC:12e:n} & \columnRef{ & }\multicolumn{2}{c |}{\nbLoC{96}} & 2 & 3 & 8 \LongVersionTable{& 169} & 0.58 & \cellYes{} \\ %
		\hline
		\stylebench{STAC:12e:v} & \columnRef{ & }\multicolumn{2}{c |}{\nbLoC{85}} & 2 & 3 & 8 \LongVersionTable{& 244} & 1.10 & \cellFixable{} \\ %
		\hline
		\stylebench{STAC:14:n} & \columnRef{ & }\multicolumn{2}{c |}{\nbLoC{88}} & 2 & 3 & 8 \LongVersionTable{& 1223} & 22.34 & \cellFixable{} \\ %
		\hline
	\end{tabular}
	
	\label{table:nonparametric}
\end{table}

We tabulate the experiments results in \cref{table:nonparametric}.
We give from left to right the model name, the numbers of automata and of clocks in the original timed automaton (this information is not relevant for Java programs as the original model is not a TA), the numbers of automata, of clocks and of parameters in the transformed PTA, the computation time in seconds (for the timed-opacity computation problem), and the result.
In the result column, ``\cellNo{}'' (resp.~``\cellYes{}'') denotes that the model is opaque (resp.\ is not opaque), while ``\cellFixable{}'' denotes that the model is not opaque, but could be fixed.
That is, although $\PrivDurReach{\valuate{\A}{\pval}}{\locpriv}{\locfinal} \neq \PubDurReach{\valuate{\A}{\pval}}{\locpriv}{\locfinal}$, their intersection is non-empty and therefore, by tuning the computation time, it may be possible to make the system opaque.
This will be discussed in \cref{ss:rendering-opaque}.

Even though we are interested here in timed opacity computation (and not in synthesis), note that all models derived from Java programs feature the parameter~$\styleparam{\epsilon}$.
The result is obtained by variable elimination, \ie{} by existential quantification over the parameters different from~$\paramabs$.
In addition, the number of parameters is increased by the parameters encoding the symbolic variables (such as $\styledisc{x}$ and $\styledisc{secret}$ in \cref{figure:example-Java:PTA}).

\paragraph*{Discussion}
Overall, our method is able to answer the timed-opacity computation problem relatively fast, exhibiting which execution times are opaque (timed-opacity computation problem), and whether \emph{all} execution times indeed guarantee opacity (timed-opacity problem).

In many cases, while the system is not opaque, we are able to \emph{infer} the execution times guaranteeing opacity (cells marked ``\cellFixable{}'').
This is an advantage of our method \wrt{} methods outputting only binary answers.

We observed some mismatches in the Java programs, \ie{} some programs marked \stylebench{n} (non-vulnerable) in the library are actually vulnerable according to our method.
This mainly comes from the fact that the STAC library uses some statistical analyses on the execution times, while we use an exact method.
Therefore, a very small mismatch between $\PrivDurReach{\valuate{\A}{\pval}}{\locpriv}{\locfinal}$ and $\PubDurReach{\valuate{\A}{\pval}}{\locpriv}{\locfinal}$ will lead our algorithm to answer ``not opaque'', while statistical methods may not be able to differentiate this mismatch from noise.
This is notably the case of \stylebench{STAC:14:n} where some action lasts either 5,010,000 or 5,000,000 time units depending on some secret, which our method detects to be different, while the library does not.
For \stylebench{STAC:1:n}, using our data, the difference in the execution time upper bound between an execution performing some secret action and an execution not performing it is larger than~1\,\%, which we believe is a value which is not negligible, and therefore this case study might be considered as vulnerable.
For \stylebench{STAC:4:n}, we used a different definition of opacity (whether the user has input the correct password, vs.\ information on the real password), which explains the mismatch.

Concerning the Java programs, we decided to keep the most abstract representation, by imposing that each instruction lasts for a time in~$[0,\styleparam{\epsilon}]$, with $\styleparam{\epsilon}$ a parameter.
However,
	fixing an identical (parametric) time $\styleparam{\epsilon}$ for all instructions,
	or fixing an arbitrary time in a constant interval $[0, \epsilon]$ (for some constant $\epsilon$, \eg{}~1),
	or even fixing an identical (constant) time $\epsilon$ (\eg{}~1) for all instructions,
significantly speeds up the analysis.
These choices can be made for larger models.
\subsubsection{Timed opacity synthesis}

Then, we address the timed-opacity synthesis problem.
In this case, we \emph{synthesize} both the execution time and the internal values of the parameters for which one cannot deduce private information from the execution time.

We consider the same case studies as for timed-opacity computation; however, the Java programs feature no internal ``parameter'' and cannot be used here.
Still, we artificially enriched one of them (\stylebench{STAC:3:v}) as follows: in addition to the parametric value of $\styleparam{\epsilon}$ and the execution time, we parameterized one of the \stylecode{sleep} timers.
The resulting constraint can help designers to refine this latter value to ensure opacity.

We tabulate the results in \cref{table:parametric}, where the columns are similar to \cref{table:nonparametric}.
A difference is that the first $|\Param|$ column denotes the number of parameters in the original model (without counting these added by our transformation).
In addition, \cref{table:parametric} does not contain a ``vulnerable?'' column as we \emph{synthesize} the condition for which the model is non-vulnerable, and therefore the answer is non-binary.
However, in the last column (``Constraint''), we make explicit whether no valuations ensure opacity (``\cellKnone{}''), all of them (``\cellKall{}''), or some of them (``\cellKsome{}'').

\paragraph*{Discussion}
An interesting outcome is that the computation time is comparable to the (non-parametric) timed-opacity computation, with an increase of up to~20\,\% only.
In addition, for all case studies, we exhibit at least some valuations for which the system can be made opaque.
Also note that our method always terminates for these models, and therefore the result exhibited is complete.
Interestingly, \stylebench{Coffee} is opaque for any valuation of the 3~internal parameters.

\begin{table}[tb]
	\caption{Experiments: timed opacity synthesis}
	\centering
	\footnotesize
	\ShortVersion{\scriptsize}
	
	\setlength{\tabcolsep}{2pt} %
	\LongVersionTable{\begin{tabular}{| c | c | c | c | c | c | c | r | r | c |}} %
	\ShortVersionTable{\begin{tabular}{| c | c | c | c | c | c | r | r | c |}} %
		\hline
		\LongVersionTable{\multicolumn{4}{| c |}{\cellHeader{}Model} & \multicolumn{3}{ c |}{\cellHeader{}Transf. PTA} & \multicolumn{3}{c |}{\cellHeader{}Result}}
		\ShortVersionTable{\multicolumn{4}{| c |}{\cellHeader{}Model} & \multicolumn{3}{ c |}{\cellHeader{}Transf. PTA} & \multicolumn{2}{c |}{\cellHeader{}Result}}
		\\
		\hline
		\rowHeader{}
		Name & \columnRef{Reference & }$|\A|$ & $|\Clock|$ & $|\Param|$ & $|\A|$ & $|\Clock|$ & $|\Param|$ \LongVersionTable{& States} & Time (s) & Constraint \\
		\hline
		\cite[Fig.~5]{VNN18} & \columnRef{\cite{VNN18} &} 1 & 1 & 0 & 2 & 3 & 4 \LongVersionTable{& 13} & 0.02 & \cellKsome{}\\
		\hline
		\cite[Fig.~1b]{GMR07} & \columnRef{\cite{GMR07} & }1 & 1 & 0 & 2 & 3 & 3 \LongVersionTable{& 25} & 0.03 & \cellKsome{}\\
		\hline
		\cite[Fig.~2]{GMR07} & \columnRef{\cite{GMR07} & }1 & 1 & 0 & 2 & 3 & 3 \LongVersionTable{& 41} & 0.05 & \cellKsome{}\\
		\hline
		Web privacy problem \cite{BCLR15} & \columnRef{\cite{BCLR15} & }1 & 2 & 2 & 2 & 4 & 3 \LongVersionTable{& 105} & 0.07 & \cellKsome{}\\
		\hline
		\stylebench{Coffee} & \columnRef{\cite{Andre18FTSCS} & }1 & 2 & 3 & 2 & 5 & 4 \LongVersionTable{& 85} & 0.10 & \cellKall{}\\
		\hline
		\stylebench{Fischer-HSRV02} & \columnRef{\cite{HRSV02} & }3 & 2 & 2 & 6 & 5 & 3 \LongVersionTable{& 2495} & 7.53 & \cellKsome{}\\
		\hline
		\stylebench{STAC:3:v} & \columnRef{\todo{Reference} & }\multicolumn{2}{c |}{\nbLoC{87}} & 2 & 2 & 3 & 9 \LongVersionTable{& 361} & 0.93 & \cellKsome{}\\
		\hline
	\end{tabular}
	
	\label{table:parametric}
\end{table}
\subsection{``Repairing'' a non-opaque PTA}\label{ss:rendering-opaque}

Our method gives a result in time of a union of polyhedra over the internal timing parameters and the execution time.
On the one hand, we believe tuning the internal timing parameters should be easy: for a program, an internal timing parameter can be the duration of a \stylecode{sleep}, for example.
On the other hand, tuning the execution time of a program may be more subtle.
A solution is to enforce a minimal execution time by adding a second thread in parallel with a \stylecode{Wait()} primitive to ensure a minimal execution time.
Ensuring a \emph{maximal} execution time can be achieved with an exception stopping the program after a given time; however there is a priori no guarantee that the result of the computation is correct.
	\ea{Sun Jun, your opinion on this point would be welcome; can we have a ``Wait'' in parallel with the program, to enforce it terminates after at least $n$ time units? Or an exception mechanism to enforce it terminates in exactly $n$ time units?}

\todo{good example: 11A}

\section{Conclusion}\label{section:conclusion}

\LongVersion{In this work, we}\ShortVersion{We} proposed an approach based on parametric timed model checking to not only decide whether the model of a timed system can be subject to timing information leakage, but also to \emph{synthesize} internal timing parameters and execution times that render the system opaque.
We implemented our approach in a framework based on \imitator{},
and performed experiments on case studies from the literature and from a library of Java programs.
\LongVersion{

}%
We now discuss future works\LongVersion{ in the following}.

\paragraph*{Theory}
We proved decidability of the timed-opacity computation problem for TAs, but we only provided an upper bound (EXPSPACE) on the complexity.
It can be easily shown that this problem is at least PSPACE, but the exact complexity remains to be exhibited.

\LongVersion{
In addition, the decidability of the one-clock case remains open:
	that is, is the timed-opacity emptiness problem decidable for PTAs containing a single clock?
Our method consists in duplicating the automaton and adding a clock that is never reset, thus resulting in a PTA with 3 clocks, for which reachability-emptiness is undecidable~\cite{AHV93}.
However, since one of the clocks is never reset, and since the automaton is structurally constrained (it is the result of the composition of two copies of the same automaton), decidability might be envisioned.
}

Finally, while we proved for the class of L/U-PTAs the decidability of the timed-opacity emptiness problem, \ie{} the non-existence of a valuation for which the system is opaque, our result does not necessarily mean that \emph{exact (complete) synthesis} is possible.
In fact, some results for L/U-PTAs were proved to be such that the emptiness is decidable but the synthesis is intractable: that is notably the case of reachability-emptiness, which is decidable~\cite{HRSV02} while synthesis is intractable~\cite{JLR15}. \ea{add minimal-time @ TACAS 2019 + paper testing Lars Luthman}%
Therefore, studying the timed-opacity synthesis problem remains to be done for L/U-PTAs.

\paragraph*{Applications}

The translation of the STAC library required some non-trivial creativity: \LongVersion{while the translation from programs to quantitative extensions of automata is orthogonal to our work, }proposing automated translations of (possibly annotated) programs to timed automata dedicated to timing analysis is on our agenda.

\LongVersion{
	Adding probabilities to our framework will be interesting, helping to quantify the execution times of ``untimed'' instructions in program with a finer grain than an interval; also note that some benchmarks make use of probabilities (notably \stylebench{STAC:18:v}).

	Finally, \imitator{} is a general model checker, not specifically aimed at solving the problem we address here.
	Notably, constraints managed by PPL contain all variables (clocks, timing parameters, and parameters encoding symbolic variables of programs), yielding an exponential complexity.
	Separating certain types of independent variables (typically parameters encoding symbolic variables of programs, and other variables) should highly increase efficiency.
}

\section*{Acknowledgements}
We \LongVersion{would like to }thank Sudipta Chattopadhyay for helpful suggestions, Jiaying Li for his help with preliminary model conversion,
and a\LongVersion{n anonymous} reviewer for suggesting \cref{remark:R2}.
\ifdefined\VersionLong
	\newcommand{\CCIS}{Communications in Computer and Information Science}
	\newcommand{\ENTCS}{Electronic Notes in Theoretical Computer Science}
	\newcommand{\FI}{Fundamenta Informormaticae}
	\newcommand{\FMSD}{Formal Methods in System Design}
	\newcommand{\IJFCS}{International Journal of Foundations of Computer Science}
	\newcommand{\IJSSE}{International Journal of Secure Software Engineering}
	\newcommand{\IPL}{Information Processing Letters}
	\newcommand{\JLAP}{Journal of Logic and Algebraic Programming}
	\newcommand{\JLC}{Journal of Logic and Computation}
	\newcommand{\LMCS}{Logical Methods in Computer Science}
	\newcommand{\LNCS}{Lecture Notes in Computer Science}
	\newcommand{\RESS}{Reliability Engineering \& System Safety}
	\newcommand{\STTT}{International Journal on Software Tools for Technology Transfer}
	\newcommand{\TCS}{Theoretical Computer Science}
	\newcommand{\ToPNoC}{Transactions on Petri Nets and Other Models of Concurrency}
	\newcommand{\TSE}{IEEE Transactions on Software Engineering}

	\renewcommand*{\bibfont}{\small}
	\printbibliography[title={References}]

\else
	\bibliographystyle{splncs04} %
	\newcommand{\CCIS}{CCIS}
	\newcommand{\ENTCS}{ENTCS}
	\newcommand{\FI}{FI}
	\newcommand{\FMSD}{FMSD}
	\newcommand{\IJFCS}{IJFCS}
	\newcommand{\IJSSE}{IJSSE}
	\newcommand{\IPL}{IPL}
	\newcommand{\JLAP}{JLAP}
	\newcommand{\JLC}{JLC}
	\newcommand{\LMCS}{LMCS}
	\newcommand{\LNCS}{LNCS}
	\newcommand{\RESS}{RESS}
	\newcommand{\STTT}{STTT}
	\newcommand{\TCS}{TCS}
	\newcommand{\ToPNoC}{ToPNoC}
	\newcommand{\TSE}{TSE}

	\bibliography{PTA}

\fi

\LongVersion{
\newpage
\appendix

\section{The code of the Java example}\label{appendix:Java}

\lstinputlisting[language=Java]{benchmarks/category1_vulnerable-cropped.java}

}

\end{document}